\documentclass[english]{article}
\usepackage[T1]{fontenc}
\usepackage[latin9]{inputenc}
\usepackage{amsmath}
\usepackage{amsthm}
\usepackage{amssymb}
\usepackage{color, xcolor}
\usepackage{graphicx}
\usepackage{caption}
\usepackage{subcaption}
\usepackage{authblk}

\makeatletter
\theoremstyle{plain}
\newtheorem{thm}{\protect\theoremname}
\theoremstyle{plain}
\newtheorem{lem}[thm]{\protect\lemmaname}
\theoremstyle{plain}
\newtheorem{cor}[thm]{\protect\corollaryname}
\theoremstyle{plain}
\newtheorem{clm}[thm]{\protect\claimname}

\usepackage{fullpage}

\makeatother

\usepackage{babel}

\usepackage[textsize=tiny,textwidth=2cm,color=green!50!gray]{todonotes} 

\providecommand{\corollaryname}{Corollary}
\providecommand{\lemmaname}{Lemma}
\providecommand{\theoremname}{Theorem}
\providecommand{\claimname}{Claim}

\begin{document}
\global\long\def\R{\mathcal{R}}%
\global\long\def\N{\mathbb{N}}%
\global\long\def\L{\mathcal{L}}%
\global\long\def\OPT{\mathrm{OPT}}%
\global\long\def\SOL{\mathrm{SOL}}%
\global\long\def\DP{\mathrm{DP}}%
\newcommand{\Exp}{\mathop{\mathbb{E}}}
\newcommand{\horstab}{{\sc Stabbing}}
\newcommand{\orthstab}{{\sc HV-Stabbing}}
\newcommand{\sqstab}{{\sc Square-Stabbing}}
\newcommand{\eps}{\varepsilon}

\title{A PTAS for the horizontal rectangle stabbing problem}

\author[1]{Arindam Khan}
\author[2]{Aditya Subramanian}
\author[3]{Andreas Wiese}

\affil[1]{Indian Institute of Science, Bengaluru, India.
        \texttt{arindamkhan@iisc.ac.in}}
\affil[2]{Indian Institute of Science, Bengaluru, India.
        \texttt{adityasubram@iisc.ac.in}}
\affil[3]{Universidad de Chile, Chile.
        \texttt{awiese@dii.uchile.cl}}



\newcommand{\awr}[1]{}
\newcommand{\adir}[1]{}
\newcommand{\arir}[1]{}

\newcommand{\note}[1]{{#1}}
\newcommand{\aw}[1]{{#1}}
\newcommand{\adi}[1]{{#1}}
\newcommand{\ari}[1]{{#1}}

\date{}

\maketitle

\abstract{We study rectangle stabbing problems \aw{in which} we are given $n$
    axis-aligned rectangles in the plane that we \aw{want to \emph{stab}, i.e., we want to select
    line segments such that for each given rectangle there is a line segment that intersects
    two opposite edges of it.}
    In the {\em horizontal rectangle stabbing problem} (\horstab), the goal is
    to find a set of horizontal line segments of minimum total length such that
    all rectangles are stabbed. In {\em general rectangle stabbing problem},
    also known as {\em horizontal-vertical stabbing problem}
    (\orthstab)\awr{maybe omit the term ``general rectangle stabbing''?}, the
    goal is to find a set of \aw{rectilinear} (i.e., either vertical or
    horizontal) line segments of minimum total length such that all rectangles
    are stabbed.  Both variants are NP-hard.  Chan, van Dijk, Fleszar,
    Spoerhase, and Wolff~\cite{ChanD0SW18} initiated the study of these
    problems by providing constant approximation algorithms.  Recently,
    Eisenbrand, Gallato, Svensson, and Venzin~\cite{QPTAS} have presented a
    QPTAS and a polynomial-time 8-approximation algorithm for \horstab~but it
    is was open whether the problem admits a PTAS.

    In this paper, we obtain a PTAS for \horstab, settling this question.  For
    \orthstab, we obtain a $(2+\eps)$-approximation. We also obtain PTASes for
    special cases of \orthstab: (i) when all rectangles are squares, (ii) when
    each rectangle's width is at most its height, and (iii) when all rectangles
    are $\delta$-large, \aw{i.e., have at least one edge whose length is at
    least $\delta$, while all edge lengths are at most 1.} Our result also
    implies improved approximations for other problems such as {\em generalized
    minimum Manhattan network}.
}

\section{Introduction}
\label{sec:intro} Rectangle stabbing problems are natural geometric
optimization problems.  Here, we are given a set of $n$ axis-parallel
rectangles $\R$ in the two-dimensional plane. For each rectangle $R_{i}\in\R$, 
we are given points
$(x_{1}^{(i)},y_{1}^{(i)}),(x_{2}^{(i)},y_{2}^{(i)})\in\mathbb{R}^{2}$ that
denote its bottom-left and top-right corners, respectively. Also, we denote its
{\em width} and {\em height} by $w_{i}:=x_{2}^{(i)}-x_{1}^{(i)}$ and
$h_{i}:=y_{2}^{(i)}-y_{1}^{(i)}$, respectively.  
\aw{Our goal is to \ari{compute} a set of line segments $\L$ that
stab all input rectangles.}
We call a rectangle {\em stabbed}
if a segment $\ell \in \L$ intersects both of its horizontal or both of its
vertical edges.
We study several variants. In the \textit{horizontal rectangle stabbing
problem} (\horstab) \aw{we want to find a} set of horizontal segments of minimum
total length such that each rectangle is stabbed. \aw{The} \textit{general rectangle
stabbing} (\orthstab) problem generalizes \horstab~and  involves finding a set
of axis parallel segments of minimum total length such that each rectangle in
$\R$ is stabbed. \aw{The} \textit{general square stabbing} (\sqstab) problem is a
special case of \orthstab~where all rectangles in the input instance are
squares.  These problems have applications in bandwidth allocation, message
scheduling with time-windows on a direct path, and geometric network design
\cite{ChanD0SW18, becchetti2009latency, das2018approximating}.

Note that
\horstab~and \orthstab~\aw{are special cases} of weighted geometric set cover, 
where the rectangles correspond to elements and potential line
segments correspond to sets, and the weight of a set equals the length of the corresponding segment.  A set
contains an element if the corresponding line segment stabs the corresponding
rectangle.  This already implies an $O(\log n)$-approximation algorithm
\cite{chvatal1979greedy} for \orthstab~and \horstab.  

Chan, van Dijk, Fleszar, Spoerhase, and Wolff~\cite{ChanD0SW18} initiated the study of \horstab. They proved
\horstab~to be NP-hard via a reduction from planar vertex cover.  
Also, they presented a constant\footnote{The constant is not
explicitly stated, and it depends on a not explicitly stated constant in \cite{chan2012weighted}.} factor approximation using {\em decomposition techniques} and the
{\em quasi-uniform sampling} method \cite{Varadarajan10} for weighted geometric
set cover.  In particular, they showed that \horstab~instances can be decomposed
into two disjoint {\em laminar} set cover instances of small shallow cell
complexity for which the {\em quasi-uniform sampling} yields a constant
approximation using techniques from \cite{ChanGKS12}.


Recently, Eisenbrand, Gallato, Svensson, and Venzin~\cite{QPTAS} presented a quasi-polynomial time
approximation scheme (QPTAS) for \horstab.  This shows that \horstab~is not APX-hard
unless $\text{NP} \subseteq \text{DTIME}(2^{\text{poly} \log n})$.  
\ari{The QPTAS relies on {\em the shifting technique} by Hochbaum and Maass
\cite{hochbaum1985approximation}, applied to a grid, consisting of
randomly shifted vertical grid lines that are equally spaced. With this approach, the plane is
partitioned into narrow disjoint vertical strips which they 
then process further. Then, this routine is applied recursively.}
%
They also gave a polynomial time dynamic programming based exact
algorithm for \horstab~for laminar instances (in which the projections of
the rectangles to the $x$-axis yield a laminar family of intervals). Then they
provided a simple polynomial-time 8-approximation algorithm by \aw{reducing
any given instance to a laminar instance. It remains open whether there is a PTAS for the problem.}
%

\begin{figure}
    \centering
    \begin{subfigure}[h]{0.4\textwidth}
        \centering
        \includegraphics[width=0.6\textwidth,page=1]{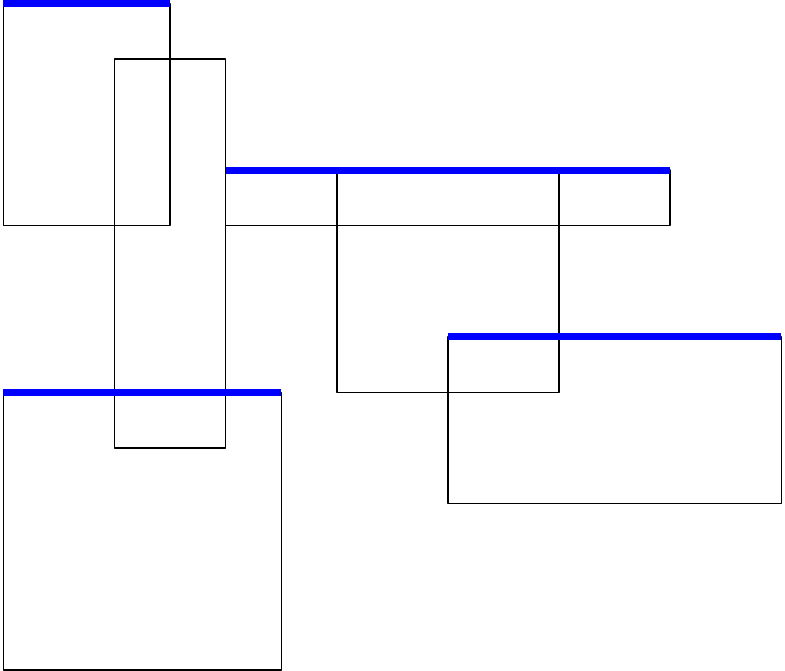}
        \caption{Horizontal Rectangle Stabbing}
    \end{subfigure}
    \begin{subfigure}[h]{0.4\textwidth}
        \centering
        \includegraphics[width=0.6\textwidth,page=2]{02genstab}
        \caption{General Rectangle Stabbing}
    \end{subfigure}
    \caption{A solution for an instance of \horstab\ and \orthstab.}
    \label{fig:genstab}
\end{figure}

\subsection{Our results}
\aw{In this paper, we give a PTAS for \horstab~and thus resolve this open question.
Also, we extend our techniques to \orthstab~for which we present a polynomial time $(2+\eps)$-approximation 
and PTASs for several special cases: when all input rectangles are squares, more generally when for each rectangle its
width is at most its height, and finally for $\delta$-large rectangles, i.e., when each rectangle has one edge whose
length is within $[\delta,1]$ and 1 is the maximum length of each edge of any input rectangle (in each dimension).
}

\aw{Our algorithm for \horstab~is in fact quite easy to state: it is a dynamic program (DP) that recursively
subdivides the plane into smaller and smaller rectangular regions. In the process, it guesses line segments from~OPT.}
However, its analysis is intricate. We show that there is a sequence of recursive decompositions that yields
a solution whose overall cost is $(1+\eps)\OPT$.
Instead of using a set of equally spaced grid lines as in \cite{QPTAS}, we use a \emph{hierarchical} grid with several levels for the decomposition.
In each level of our decomposition, we subdivide the given 
rectangular region
into strips of narrow width and guess $O_\eps(1)$ line segments from OPT inside them
which correspond to the current level. One crucial ingredient is that we slightly extend the segments, such that the guessed horizontal line segments are aligned with
our grid. The key consequence is that it will no longer be necessary to remember these line segments once we have advances three levels further in the
decomposition. Also, for the guessed vertical line segments of the current level we introduce additional (very short) horizontal line segments, such that
we do not need to remember them either, once we advanced three levels more in the decomposition. Therefore, the DP needs to remember 
previously guessed line segments from only the last three previous levels 
and afterwards these line segments {\em vanish}. This allows us to bound the number of arising subproblems (and hence of the DP-cells) 
by a polynomial. 

Our techniques easily generalize to a PTAS for \sqstab~and to a PTAS for 
\orthstab~if for each input rectangle its
width is at most its height, whereas the QPTAS in \cite{QPTAS} worked only for \horstab. 
We use the latter PTAS as a subroutine in order to obtain a polynomial time $(2+\eps)$-approximation for \orthstab~-- which improves
on the \ari{result}  in \cite{ChanD0SW18}. 

\aw{Then, we extend our techniques above to the setting of $\delta$-large rectangles of \orthstab. 
\ari{This is an important subclass of rectangles and are well-studied for other geometric problems \cite{AdamaszekW13, KP20}}. 
To this end, we first reduce the problem
to the setting in which all input rectangles are contained in a rectangular box that admits a solution of cost $O(1/\eps^3)$.
Then we guess the relatively long
line segments in OPT in polynomial time. The key argument is that then the remaining problem splits into two independent subproblems, one for the horizontal 
and one for the vertical line segments in OPT. For each of those, we then apply our PTAS for \horstab~which then yields a PTAS for 
\orthstab~if all rectangles are $\delta$-large.
}
%
%
%
%
%
%
%

Finally, \ari{our PTAS for \horstab~implies improved approximation ratios for the 
{\sc Generalized  Minimum Manhattan Network} (GMMN)
and  $x$-separated 2D-GMMN problems, of  $(4+\eps)\log n$ and  $4+\eps$, respectively, 
by improving certain subroutines
of the algorithm in \cite{das2018approximating}.}
%
%
%

\subsection{Further related work}
Finke et al.~\cite{finke2008batch}  gave a polynomial time exact algorithm for
a special case of \horstab~where all input rectangles have their left edges
lying on the $y$-axis. Das et al.~\cite{das2018approximating} studied a related
variant in the context of the {\sc Generalized 
Minimum Manhattan Network} (GMMN) problem. 
In GMMN, we are given a set of $n$ terminal-pairs and the goal 
is to find a minimum-length rectilinear network such that each pair is connected by a 
{\em Manhattan path}. 
They obtained a $4$-approximation for a variant of \horstab~where all rectangles intersect a vertical line.
Then they used it to obtain a $(6+\eps)$-approximation algorithm for the $x$-separated 2D-GMMN problem,
 a special case of 2D-GMMN, and $(6+\eps)(\log
n)$-approximation for 2-D GMMN.  

Gaur et al.~\cite{gaur2002constant} studied \aw{the problem of}
stabbing rectangles by a {\em minimum number of axis-aligned lines} and gave an
LP-based  2-approximation algorithm. Kovaleva and Spieksma
\cite{kovaleva2006approximation} considered a weighted generalization of this
problem and gave an $O(1)$-approximation algorithm.

\aw{\horstab~and \orthstab~are related to}
geometric set cover which is a fundamental geometric optimization problem.
Br{\"{o}}nnimann and Goodrich \cite{BronnimannG95} in a seminal paper gave an
$O(d \log (d \OPT))$-approximation for unweighted geometric set cover where $d$
is the dual VC-dimension of the set system and $\OPT$ is the value of the
optimal solution.  Using {\em $\eps$-nets}, Aronov et
al.~\cite{aronov2010small}  gave an $O(\log \log OPT)$-approximation for
hitting set for axis-parallel rectangles.  Later, Varadarajan
\cite{Varadarajan10} developed  quasi-uniform sampling  and provided
sub-logarithmic  approximation for weighted set cover where sets are weighted
fat triangles or weighted disks. Chan et al.~\cite{ChanGKS12} generalized this
to any set system with low {\em shallow cell complexity}.  Bansal and Pruhs
\cite{bansal2014geometry} reduced several scheduling problems to a particular
geometric set cover problem for anchored rectangles and obtained
$O(1)$-approximation using quasi-uniform sampling.  Afterward, Chan and Grant
\cite{ChanG14} and Mustafa et al.~\cite{MustafaRR15} have settled the
APX-hardness statuses of all natural weighted geometric set cover problems.

Another related problem is maximum independent set of rectangles (MISR).
Adamaszek and Wiese \cite{AdamaszekW13} gave a QPTAS for MISR using 
{\em balanced} cuts with small complexity that intersect only a few
rectangles in the solution.  Recently, Mitchell \cite{mitchell21} obtained the
first polynomial time constant approximation algorithm for the problem,
followed by a $(2+\eps)$-approximation by  G{\'{a}}lvez et al.~\cite{Galvez22}.

Many other rectangle packing problems are also well-studied
\cite{ChristensenKPT17},  such as geometric knapsack \cite{GalvezGHI0W17,
Galvez00RW21, KMSW21}, geometric bin packing \cite{BansalK14, KhanS21,
KhanSS21}, strip packing \cite{Galvez0AJ0R20, DeppertJ0RT21, harren20145},
storage allocation problem \cite{MomkeW20, MomkeW15}, etc.

\section{Dynamic program} \label{sec:DP}

We present a dynamic program that computes a $(1+\eps)$-approximation to
\orthstab~for the case where $h_{i}\ge w_{i}$ for each rectangle $R_{i}\in\R$.
\aw{This implies directly a PTAS for the setting of squares for the same
    problem, and we will argue that it also yields a PTAS for \horstab.  Also,
we will use it later as a subroutine to obtain a $(2+\eps)$-approximation
for~\orthstab~and a PTAS for the setting of $\delta$-large rectangles of
\orthstab.  }

For a \aw{line} segment~$\ell$, we use the notation $|\ell|$ to represent its
length, and for a set of segments $\L$, we use notation $c(\L)$ to represent
the cost of the set, which is also the total length of the segments contained
in it. \adi{We use the term $\OPT$ interchangeably to refer to the optimal
solution to the problem and also to $c(\OPT)$, i.e., the cost of the optimal
solution.}

\subsection{Preprocessing step}
First, we show that by some simple scaling and discretization steps we can
ensure some simple properties that we will use later. Without loss of
generality we assume that  $(1/\eps)\in\N$ and we say that a value
$x\in\mathbb{R}$ is \emph{discretized} if $x$ is an integral multiple of
${\eps}/{n}$.
\begin{lem} \label{lem:preprocess}
    For any positive constant $\eps<1/3$,
    By losing a factor $(1+O(\eps))$ in the approximation ratio, we can assume
    for each $R_{i}\in\R$ the following properties hold:
    \begin{enumerate}
        \item[(i)] ${\eps}/{n}\le w_{i}\le 1$,
        \item[(ii)] $x_{1}^{(i)},x_{2}^{(i)}$ are discretized and within $[0, n]$,
        \item[(iii)] $y_{1}^{(i)},y_{2}^{(i)}$ are discretized and within
            $[0,4n^2]$, and
        \item[(iv)] each horizontal line segment in the optimal solution has
            width of at most $1/\eps$.
    \end{enumerate}
\end{lem}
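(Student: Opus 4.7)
The plan is to apply a sequence of simple transformations---scaling, rounding, and segment splitting---each of which costs at most a $(1+O(\eps))$ multiplicative factor in $\OPT$. For property~(i), I would first scale so that the largest edge length among all input rectangles equals~$1$, giving $w_i\le 1$ directly. For the lower bound $w_i\ge\eps/n$, I would handle each rectangle with $w_i<\eps/n$ separately by stabbing it with a dedicated short segment; since there are at most $n$ such rectangles, the added cost is at most $\eps$, and $\OPT\ge 1$ is forced by the rectangle of edge length~$1$, so the loss is within $\eps\cdot\OPT$.

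For properties~(ii) and~(iii), I would snap each rectangle corner's coordinates outward to the nearest multiple of $\eps/n$, enlarging each rectangle by at most $\eps/n$ per dimension. A feasible solution for the snapped instance is feasible for the original; conversely, extending each segment of any feasible solution for the original by $\eps/n$ at each endpoint produces a feasible solution for the snapped instance. Assuming, by merging collinear overlapping segments, that $\OPT$ uses polynomially many segments, the total extension cost is $O(\eps)\cdot\OPT$. To enforce the coordinate ranges, I would partition the instance into connected components along $x$-gaps and translate each one; within any component the $x$-extent is at most the sum of widths, hence at most $n$, establishing~(ii). A parallel argument in the $y$-direction, combined with stacking the independent components within a single coordinate system, yields an $O(n^2)$ bound, with $4n^2$ being a convenient explicit constant once discretization slack is included.

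For property~(iv), I would replace each horizontal segment in $\OPT$ of length $L>1/\eps$ by $O(L\eps)$ overlapping sub-segments, each of length at most $1/\eps$, with any two consecutive sub-segments overlapping by~$1$ (the maximum rectangle width). The overlap ensures that every rectangle originally stabbed by the long segment lies fully inside some sub-segment and thus remains stabbed. Each split contributes at most~$1$ of extra length, so the total increase is $O(\eps)\cdot c(\OPT)$.

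The main care I expect is in justifying the explicit $4n^2$ bound in~(iii): the naive component-based argument only gives an $O(n)$ bound in each direction, and the extra factor of~$n$ arises from the need to combine independent components into a single coordinate system for the later DP. The remaining parts follow from standard scaling, rounding, and splitting arguments.
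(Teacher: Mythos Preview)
Your outline has two genuine gaps.

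\textbf{Property~(i).} Scaling so that the largest \emph{edge} equals~$1$ does not force $\OPT\ge\Omega(1)$. In the setting of this lemma each rectangle satisfies $h_i\ge w_i$, so the cheapest way to stab a single rectangle is by a horizontal segment of length $w_i$; hence $\OPT\ge\max_i w_i$, not $\max_i\max(w_i,h_i)$. If the longest edge is a height while all widths are tiny, your claim ``$\OPT\ge 1$'' fails and the additive $\eps$ you spend on narrow rectangles is not $O(\eps)\cdot\OPT$. The paper scales by the largest \emph{width} (to $1-2\eps$), which directly gives $\OPT\ge 1-2\eps$.

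\textbf{Property~(iii).} The ``parallel argument in the $y$-direction'' does not go through: unlike widths, heights are unbounded here (we only assume $h_i\ge w_i$), so a single connected $y$-component can have arbitrarily large extent, and summing heights gives no polynomial bound. The $4n^2$ is not a bookkeeping artifact of stacking components; it comes from a different mechanism. The paper first observes that after the width normalization $\OPT\le n$ (each rectangle can be stabbed at cost $\le 1$), so no vertical segment in $\OPT$ has length exceeding~$n$. Consequently, whenever two consecutive rectangle $y$-coordinates are more than $2n$ apart, that vertical gap is never spanned by any segment of $\OPT$, and one may compress it down to at most~$2n$ without changing the optimum. With at most $2n$ distinct $y$-coordinates, each consecutive gap now at most $2n$, the total $y$-extent is at most $4n^2$. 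Your component-translation idea cannot substitute for this compression step; you need the bound $\OPT\le n$ (or an equivalent argument that long vertical gaps are irrelevant).

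Your treatments of~(ii) and~(iv) are essentially the same as the paper's and are fine.
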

\begin{proof}
    Let us start with a {\em scaling} step.  We can scale each rectangle in the
    input in both directions such that the largest width of a rectangle is
    $1-2\eps$, and hence $c(\OPT)\ge1-2\eps$.  Now, we see that for all
    rectangles with width at most $\eps/n$ we can stab them greedily using
    segments of total length at most $\eps$.  So, for $\eps<1/3$, we get a
    multiplicative loss of
    \[ \frac{1-\eps}{1-2\eps}=1+\eps\cdot\frac{1}{1-2\eps}
    <1+\eps\cdot\frac{1}{1-2(1/3)}=1+3\cdot\eps.\]
    Hence, with a factor $ (1+O(\eps))$ loss we can assume that for all
    rectangles ${\eps}/{n}\le w_i\le1-2\eps$.

    Now we incorporate a {\em discretization} step. Since each rectangle
    $R_i\in\R$ has width $w_i\le 1-2\eps$, the instance either has some
    $x$-coordinate that is not covered by any rectangle -- in which case we can
    split the instance into smaller subproblems around this $x$-coordinate --
    or the total width of the instance is less than $n(1-2\eps)$.  Hence all
    the $x$-coordinates of the rectangles can be assumed to be between 0 and
    $n(1-2\eps)$. Further, we extend the rectangles on both sides to make their
    $x$-coordinates align with the next nearest multiple of ${\eps}/{n}$.  This
    proves Property $(ii)$. Also this involves extension by at most
    ${2\eps}/{n}$ to the width of every rectangle, and at most a $2\eps$
    addition to the cost of the solution.  Thus, ${\eps}/{n}\le w_i\le1$,
    proving Property $(i)$.

    Next, we show Property $(iii)$ using a {\em stretching} step. Since we
    have $h_i \ge w_i$ for the input instance, the heights of the rectangles
    could still be arbitrarily large.  But after scaling the rectangles, as
    $w_i\le1$ we know that $c(\OPT)$ is clearly less than $n$.  So, there
    cannot be any vertical line segment in $\OPT$ which is longer than $n$.
    We exploit this fact to scale the heights of the rectangles as follows:
    of the at most $2n$ distinct $y$-coordinates of the rectangles, if the
    distance between any consecutive coordinates is more than $2n$, we
    stretch it down to \adi{the nearest multiple of $\eps/n$ which is less
    than} $2n$. This operation does not affect the size of any optimal
    solution since given any such solution, there cannot be a line that spans
    one of these stretched vertical sections (as it would add cost $2n$).
    Hence, any valid optimal solution to this stretched instance is also a
    valid optimal solution to the original instance.  Since all the
    $y$-coordinates are separated by at most $2n$, all the coordinates can be
    assumed to be within  $[0, 4n^2]$, and similar to the $x$-coordinates, we
    have extended rectangles by at most $\eps/n$ on each side with at most
    $2\eps$ addition to the cost of the solution.

    Finally, we consider Property $(iv)$.  Let $\OPT'$ be the optimal solution
    after the above three steps. Then $\OPT'$ only increases the original
    $\OPT$ by  $(1+O(\eps))$-factor.  Consider any horizontal segment
    $\ell\in\OPT'$, that is longer than $1/\eps$. From the left end point, we
    divide the segment into consecutive smaller segments of length $1/\eps-2$
    each, with one potential last piece being smaller than $1/\eps-2$. Now, for
    each smaller segment we extend it on both sides in such a way that it
    completely stabs the rectangles that it intersects.  Since the maximum
    width of a rectangle is 1, we extend each such segment by at most 2 units.
    In the worst case, the highest possible fractional increase due to the
    above division happens when we have a segment of length $1 /\eps+\delta$
    for very small $\delta>0$. But even in this case the fractional increase in
    the length of segments in $\OPT'$ can be bounded by,
    \[\frac{1/\eps+\delta+4}{1/\eps+\delta} =1+4\cdot\frac{1}{1/\eps+\delta}
    \le 1+4\eps.  \]
    This completes the proof of the lemma.
\end{proof}

\begin{figure}
    \centering
    \begin{subfigure}[h]{0.4\textwidth}
        \centering
        \includegraphics[width=\textwidth,page=1]{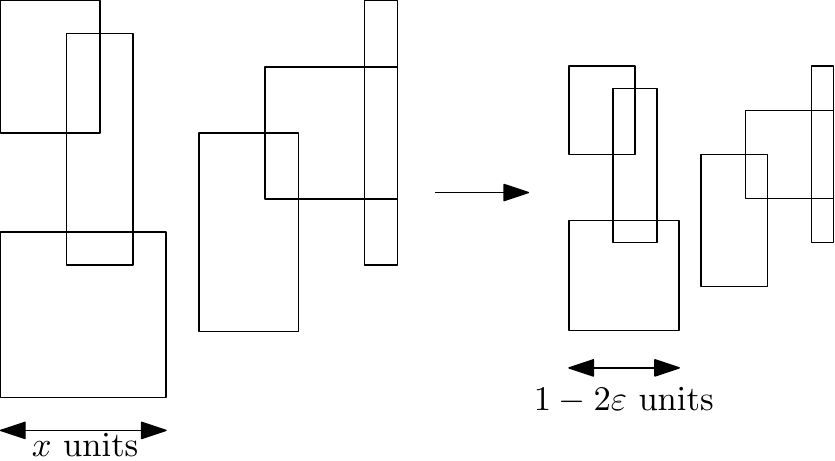}
        \caption{Scaling}
    \end{subfigure}
    \hspace{1.5cm}
    \begin{subfigure}[h]{0.4\textwidth}
        \centering
        \includegraphics[width=\textwidth,page=2]{01preprocess}
        \caption{Discretization}
    \end{subfigure}

    \begin{subfigure}[h]{0.4\textwidth}
        \centering
        \includegraphics[width=\textwidth,page=3]{01preprocess}
        \caption{Stretching}
    \end{subfigure}
    \hspace{1.5cm}
    \begin{subfigure}[h]{0.4\textwidth}
        \centering
        \includegraphics[width=\textwidth,page=4]{01preprocess}
        \caption{Resizing segments in $\OPT$}
    \end{subfigure}
    \caption{Pre-processing steps.}
    \label{fig:preprocess}
\end{figure}

\noindent Henceforth in this paper when we refer to the set of input rectangles
$\R$, we are referring to a set $\R'$ that has been obtained after applying the
pre-processing from Lemma~\ref{lem:preprocess} to the input set $\R$, and when
we refer to $\OPT$, we are referring to the optimal solution to the set of
rectangles $\R'$, which is an $1+O(\eps)$ approximation of the optimal solution
of the input instance.

\subsection{Description of the dynamic program}
\label{sec:DPdesc}
Our algorithm is based on a dynamic program. It
has a cell $\DP(S,\L)$ for each combination of
\begin{itemize}
    \item a rectangle $S\subseteq[0,n]\times[0,4n^2]$ with discretized
        coordinates (that is not necessarily equal to an input rectangle in
        $\R$).
    \item a set $\L$ of at most $3(1/\eps)^{3}$ line segments, each of them
        horizontal or vertical, such that for each $\ell\in\L$ we have that
        $\ell\subseteq S$, and all coordinates of $\ell$ are discretized.
\end{itemize}
This DP-cell corresponds to the subproblem of stabbing all rectangles in $\R$
that are contained in $S$ and that are not already stabbed by the line segments
in $\L$. Therefore, the DP stores solution $\SOL(S,\L)$ in the cell $\DP(S,\L)$
such that $\SOL(S,\L)\cup\L$ stabs all rectangles in $\R$ that are contained in
$S$.

Given a DP-cell $\DP(S,\L)$, our dynamic program computes a solution for it as
follows. If $\L$ already stabs each rectangle from $\R$ that is contained in
$S$, then we simply define a solution $\SOL(S,\L):=\emptyset$ for the cell
$\DP(S,\L)$ and do not compute anything further. Another simple case is when
there is a line segment $\ell\in\L$ such that $S\setminus\ell$ has two
connected components $S_{1},S_{2}$. In this case we define
$\SOL(S,\L):=\SOL(S_{1},\L\cap S_{1})\cup \SOL(S_{2},\L\cap
S_{2})\cup\{\ell\}$, where for any set of line segments $\L'$ and any rectangle
$S'$ we define $\L'\cap S':=\{\ell'\cap S'|\ell'\in\L'\wedge\ell'\cap
S'\ne\emptyset\}$.  In case that there is more than one such line segment
$\ell\in\L$ then we pick one according to some arbitrary but fixed global
tie-breaking rule. We will later refer to this as \emph{trivial operation}.

Otherwise, we do each of the following operations which produces a set of
candidate solutions:
\begin{enumerate}
    \item \emph{Add operation:} Consider each set $\L'$ of line segments with
        discretized coordinates such that $|\L|\cup|\L'|\le3\eps^{-3}$ and
        each $\ell\in\L'$ is contained in $S$ and horizontal or vertical. For
        each such set $\L'$ we define the solution $\L'\cup \SOL(S,\L\cup\L')$
        as a candidate solution.
    \item \emph{Line operation:} Consider each vertical/horizontal line $\ell$
        with a discretized vertical/horizontal coordinate such that
        $S\setminus\ell$ has two connected components $S_{1}$ and $S_{2}$. Let
        $\R_{\ell}$ denote the rectangles from $\R$ that are contained in $S$
        and that are stabbed by $\ell$. For the line $\ell$ we do the
        following:
        \begin{enumerate}
            \item compute an $O(1)$-approximate solution $\L(\R_{\ell})$ for
                the rectangles in $\R_{\ell}$ using the polynomial time
                algorithm in \cite{ChanD0SW18}.
            \item produce the candidate solution $\L(\R_{\ell})\cup
                \SOL(S_{1},\L\cap S_1)\cup \SOL(S_{2},\L\cap S_2)$.
        \end{enumerate}
\end{enumerate}
Note that in the line operation we consider entire lines, not just line
segments. We define $\SOL(S,\L)$ to be the solution of minimum cost among all
the candidate solutions produced above and store it in $\DP(S,\L)$.

We do the operation above for each DP-cell $\DP(S,\L)$. Finally, we output the
solution $\SOL([0,n]\times[0,4n^2],\emptyset)$, i.e., the solution
corresponding to the cell $\DP([0,n]\times[0,4n^2],\emptyset)$.

We remark that instead of using the $O(1)$-approximation algorithm in
\cite{ChanD0SW18} for stabbing the rectangles in $\R_{\ell}$, one could design
an algorithm with a better approximation guarantee, using the fact that all
rectangles in $\R_{\ell}$ are stabbed by the line $\ell$.  However, for our
purposes an $O(1)$-approximate solution is good enough.

\subsection{Definition of DP-decision tree}

We want to show that the DP above computes a $(1+\eps)$-approximate solution.
For this, we define a tree $T$ in which each node corresponds to a cell
$\DP(S,\L)$ of the DP and a corresponding solution $\overline{\SOL}(S,\L)$ to
this cell. The root node of $T$ corresponds to the cell
$\DP([0,n]\times[0,4n^2],\emptyset)$.  Intuitively, this tree represents doing
one of the possible operations above, of the DP in the root problem
$\DP([0,n]\times[0,4n^2],\emptyset)$ and recursively one of the possible
operation in each resulting DP-cell.  The corresponding solutions in the nodes
are the solutions obtained by choosing exactly these operations in each
DP-cell. Since the DP always picks the solution of minimum total cost this
implies that the computed solution has a cost that is at most the cost of the
root, $c(\overline{\SOL}([0,n]\times[0,4n^2],\emptyset))$.

Formally, we require $T$ to satisfy the following properties. We require that a
node $v$ is a leaf if and only if for the corresponding DP-cell $\DP(S,\L)$ the
DP directly defined that $\DP(S,\L)=\emptyset$ because all rectangles in $\R$
that are contained in $S$ are already stabbed by the segments in $\L$. If a
node $v$ for a DP-cell $\DP(S,\L)$ has one child then we require that we reduce
the problem for $\DP(S,\L)$ to the child by applying the add operation, i.e.,
there is a set $\L'$ of horizontal/vertical line segments with discretized
coordinates such that $|\L|\cup|\L'|\le3(1/\eps)^3$, the child node of $v$
corresponds to the cell $\DP(S,\L\cup\L')$, and
$\overline{\SOL}(S,\L)=\overline{\SOL}(S,\L\cup\L')\cup\L'$.

Similarly, if a node $v$ has two children then we require that we can reduce
the problem of $\DP(S,\L)$ to these two children by applying the trivial
operation or the line operation. Formally, assume that the child nodes
correspond to the subproblems $\DP(S_1, \L_1)$ and $\DP(S_2, \L_2)$. If there
is a segment $\ell\in\L$ such that $S_1\cup S_2\cup\ell=S$, then the applied
operation was a trivial operation, and it must also be true that
$\L_1\cup\L_2\cup\{\ell\}=\L$ and
$\overline{\SOL}(S,\L)=\overline{\SOL}(S_1,S_1\cap\L)\cup\overline{\SOL}(S_2,
S_2\cap\L)$. If no such segment exists, then the applied operation was a line
operation on a line along the segment $\ell$, such that $S_1\cup S_2\cup
\ell=S$, $\L_1\cup\L_2=\L$, and
$\overline{\SOL}(S,\L)=\overline{\SOL}(S_1,S_1\cap\L)\cup\overline{\SOL}(S_2,
S_2\cap\L)\cup\L(\R_\ell)$; where $\L(\R_\ell)$ is a $O(1)$-approximate
solution for the set of segments stabbing the set of rectangles intersected by
$\ell$.

We call a tree $T$ with these properties a \emph{DP-decision-tree}.

\begin{lem} \label{lem:runtime}
    If there is a DP-decision-tree $T'$ for which $c(\overline{\SOL}(
    [0,n]\times[0,4n^2],\emptyset))\le(1+\eps)\OPT$ then the DP is a $
    (1+\eps)$-approximation algorithm with a running time
    of $(n/\eps)^{O(1/\eps^3)}$.
\end{lem}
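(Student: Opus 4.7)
The plan is to split the proof into two parts: (i) bound the DP's running time by counting cells and work per cell, and (ii) establish the approximation guarantee by a bottom-up induction on $T'$, showing that at every node the DP's stored solution costs no more than the tree's corresponding $\overline{\SOL}$.

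For the running time I would first count DP cells. Each cell $\DP(S,\L)$ is indexed by a rectangle $S$ with four discretized coordinates -- two $x$-values in $[0,n]$ and two $y$-values in $[0,4n^2]$, all multiples of $\eps/n$ -- giving $(n/\eps)^{O(1)}$ choices for $S$. The set $\L$ consists of at most $3/\eps^3$ horizontal or vertical segments inside $S$, each determined by discretized endpoints in $(n/\eps)^{O(1)}$ positions, so the number of possible $\L$'s is $(n/\eps)^{O(1/\eps^3)}$. Hence the total number of DP cells is $(n/\eps)^{O(1/\eps^3)}$. For each cell, the trivial operation is clearly polynomial; the line operation enumerates $(n/\eps)^{O(1)}$ candidate lines, each yielding a subproblem resolved by the $O(1)$-approximation of \cite{ChanD0SW18} in polynomial time; and the add operation enumerates up to $(n/\eps)^{O(1/\eps^3)}$ candidate sets $\L'$. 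Thus each cell is processed in time $(n/\eps)^{O(1/\eps^3)}$ and the total running time matches the claim.

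For the approximation guarantee I would prove, by induction on $T'$ from leaves upward, the statement $c(\SOL(S,\L))\le c(\overline{\SOL}(S,\L))$ for every DP cell corresponding to a node of $T'$. The base case is a leaf, where both solutions are $\emptyset$. For a one-child node arising from an add operation with set $\L'$, the DP considers $\L'$ as one of its candidates and produces cost at most $c(\L')+c(\SOL(S,\L\cup\L'))$, which by the inductive hypothesis is at most $c(\L')+c(\overline{\SOL}(S,\L\cup\L'))=c(\overline{\SOL}(S,\L))$. For a two-child node arising from a line operation on line $\ell$, the DP enumerates the same $\ell$ and invokes the same $O(1)$-approximation algorithm to produce $\L(\R_\ell)$, so the DP's candidate cost matches the tree's up to the recursively bounded child costs. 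The trivial-operation case is analogous, using the same splitting segment $\ell\in\L$. Since the DP stores the minimum-cost candidate, the inductive step holds; applying it at the root yields $c(\SOL([0,n]\times[0,4n^2],\emptyset))\le c(\overline{\SOL}([0,n]\times[0,4n^2],\emptyset))\le(1+\eps)\OPT$.

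The main mild subtlety is ensuring that when the DP is forced into the trivial operation (because some segment of $\L$ splits $S$), the tree $T'$ is compatible with the DP's tie-breaking rule so that the resulting children match those considered by the DP. Since $T'$ is only assumed to exist, we may without loss of generality take it to be built in accordance with the DP's deterministic choices at such cells, so the induction goes through. The real difficulty of the approximation scheme lies elsewhere -- namely in exhibiting a $T'$ whose root solution has cost at most $(1+\eps)\OPT$ -- rather than in this lemma, which is essentially bookkeeping about the DP.
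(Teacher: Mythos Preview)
Your proposal is correct and follows essentially the same approach as the paper. The running-time count (bounding the number of cells by bounding rectangles $S$ and segment sets $\L$ separately, then multiplying by the per-cell work for add and line operations) is exactly what the paper does. For the approximation guarantee, the paper simply asserts in one sentence that ``the DP always picks a solution that corresponds to the DP-decision tree with the minimum cost,'' whereas you spell out the underlying bottom-up induction over $T'$; this is the same argument, just made explicit. Your remark about the trivial-operation tie-breaking is a subtlety the paper does not mention at all, and your ``without loss of generality align $T'$ with the DP's deterministic choices'' is a reasonable way to dispatch it.
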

\begin{proof}
    We know that the DP always picks a solution that corresponds to the
    DP-decision tree with the minimum cost. Since $T'$ is a valid DP-decision
    tree, the solution picked by the DP is of cost at most
    $c(\overline{\SOL}([0,n]\times[0,4n^2],\emptyset))$.

    Now let us consider the running time of the algorithm. Since a DP problem
    is defined on a discretized rectangular cell, there  are at most
    $\left(\frac{n}{(\eps/n)}\right)\times
    \left(\frac{4n^2}{(\eps/n)}\right)=\frac{4n^5}{\eps^{2}}$ possible corner
    vertices for the rectangle, and hence $\binom{4n^5/\eps^{2}}{2}$, i.e.,
    $O(n^{10}/\eps^{4})$ possible rectangles.

    Similarly, the subproblem definition also includes a set of segments $\L$
    of size at most $3\eps^{-3}$. Since the segments are discretized, there can
    be at most  $\binom{4n^3/\eps}{2}\times\frac{n^2}{\eps}+
    \binom{n^2/\eps}{2}\times\frac{4n^3}{\eps}$, i.e., $O (n^8/\eps^3)$
    possible segments. So there are at most $(n/\eps)^ {O(1/\eps^3)}$ sets of
    size lesser than $3\eps^{-3}$, and at most $(n/\eps)^{O(1/\eps^3)}$ valid
    DP subproblems.

    For each subproblem we have to consider all possible candidate
    solutions/operations and select the minimum. There are at most
    ${n^2}/{\eps} + {4n^3}/{\eps}$, i.e., $O(n^3/\eps)$
    possible line operations, 
    and $(n/\eps)^{O(1/\eps^3)}$ possible add operations (we can charge
    the trivial operations to the corresponding add operation, and do not have
    to account for them here). This brings the total number of operations to be
    of the order of $(n/\eps)^{O(1/\eps^3)}$. Now in each
    line operation we call the $O(1)$-approximation algorithm from
    \cite{ChanD0SW18} which is also polynomial time.  This
    brings the running time of our algorithm to $(n/\eps)^{O
    (1/\eps^3)}$.
\end{proof}
We define now a DP-decision-tree for which
$c(\overline{\SOL}(S,\L))\le(1+\eps)\OPT$.  Assume w.l.o.g. that $1/\eps\in\N$.
We start by defining a hierarchical grid of vertical lines.  Let $a\in\N_{0}$
be a random offset to be defined later. The grid lines have levels. For each
level $j\in\N_{0}$, there is a grid line
$\{a+k\cdot\eps^{j-2}\}\times\mathbb{R}$ for each $k\in\N$. Note that for each
$j\in\N_{0}$ each grid line of level $j$ is also a grid line of level $j+1$.
Also note that any two consecutive lines of some level $j$ are exactly
$\eps^{j-2}$ units apart.

We say that a line segment $\ell\in\OPT$ is \emph{of level} $j$ if the length
of $\ell$ is in $(\eps^j, \eps^{j-1}]$ (Note that we can have vertical segments
which are longer then $1/\eps$, we consider these also to be of level 0). We
say that a \emph{horizontal} line segment of some level $j$ is
\emph{well-aligned }if both its left and its right $x$-coordinates lie on a
grid line of level $j+3$, i.e., if both of its $x$-coordinates are of the form
$a+k\cdot\eps^{j+1}$. We say that a \emph{vertical} line segment of some level
$j$ is \emph{well-aligned }if both its top and bottom $y$-coordinates are
integral multiples of $\eps^{j+1}$.  This would be similar to the segment's end
points lying on an (imaginary) horizontal grid line of level $j+3$.
\begin{lem}
    By losing a factor $1+O(\eps)$, we can assume that each line segment
    $\ell\in\OPT$ is well-aligned.
\end{lem}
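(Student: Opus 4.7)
The plan is to perturb OPT by extending each segment outward until it becomes well-aligned at its own level, and to bound the total extra length by $O(\eps)\cdot\OPT$. For each horizontal segment $\ell\in\OPT$ of level $j$, I would push its left endpoint leftward to the nearest level-$(j+3)$ grid line on or to the left of it, and push its right endpoint rightward to the nearest level-$(j+3)$ grid line on or to the right of it. Both new endpoints then lie on level-$(j+3)$ grid lines, so the extended $\ell$ is well-aligned. Since consecutive level-$(j+3)$ grid lines are $\eps^{j+1}$ apart, each endpoint moves outward by at most $\eps^{j+1}$, so the length of $\ell$ grows by at most $2\eps^{j+1}$; because the original length exceeds $\eps^j$, this is a fractional increase of at most $2\eps^{j+1}/\eps^j=2\eps$.

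For every vertical segment of level $j$, I would do the symmetric snapping in the $y$-direction, extending each endpoint outward to the nearest integral multiple of $\eps^{j+1}$. The same computation yields a fractional length increase of at most $2\eps$, and for the exceptional level-$0$ vertical segments whose length already exceeds $1/\eps$ the increase is at most $2\eps^2$. Because each segment is only extended and never shortened, every rectangle stabbed by $\ell$ in OPT remains stabbed by the extended $\ell$, so the perturbed set of segments is still a feasible stabbing solution. Summing the per-segment length increase over all of OPT yields total extra length at most $2\eps\cdot c(\OPT)$, so the well-aligned solution has cost at most $(1+2\eps)\OPT=(1+O(\eps))\OPT$, as required.

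The main point I expect to require care is that extending a segment whose original length lies just below $\eps^{j-1}$ can push its new length above $\eps^{j-1}$ and thus nominally promote it to level $j-1$; this would be a genuine obstruction because level-$(j+3)$ alignment does not imply level-$(j+2)$ alignment (the level-$(j+3)$ grid is a strict refinement of the level-$(j+2)$ grid, so the finer alignment is the weaker of the two requirements). I would resolve this by fixing the level of each segment once and for all according to its length in the original OPT, so that the well-alignment condition is always evaluated at the pre-perturbation level $j$, which is exactly what the construction above produces. In particular, no iteration, re-alignment step, or use of the random offset $a$ is needed for this step; the randomness in $a$ will be exploited elsewhere in the overall analysis.
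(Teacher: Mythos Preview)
Your proof is correct and follows essentially the same approach as the paper: extend each level-$j$ segment by at most $\eps^{j+1}$ on each side to snap to the level-$(j+3)$ grid (respectively, to multiples of $\eps^{j+1}$ in $y$), and use $|\ell|>\eps^j$ to bound the fractional increase by $2\eps$. Your additional remarks on feasibility, on the long level-$0$ vertical segments, and on fixing the level according to the pre-perturbation length are all correct refinements that the paper's proof leaves implicit.
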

\begin{proof}
    A segment $\ell$ in some level $j$ will be
    of length in $(\eps^j, \eps^{j-1}]$. To align it to a grid line of
    level $j+3$ we would need to extend it by at most $\eps^{j+1}$ on each
    side. The new segment $\ell'$ thus obtained is of length
    \[ |\ell'| \le |\ell| + 2\eps^{j+1} \le |\ell| \left(1+\frac{2\eps^{j+1}}{|\ell|}\right)
    <|\ell|\left(1+\frac{2\eps^{j+1}}{\eps^j}\right)=|\ell|\cdot(1+2\eps).\]
     Therefore, the sum of weights over all the segments in $\OPT$ is \[\sum_{\ell\in\OPT}
    |\ell'|\le \sum_{\ell\in\OPT} |\ell|\cdot(1+2\eps) = (1+2\eps)\cdot\OPT.
    \qedhere \]
\end{proof}

We define the tree $T$ by defining recursively one of the possible operations
(trivial operation, add operation, line operation) for each node $v$ of the
tree. After applying an operation, we always add children to the processed node
$v$ that corresponds to the subproblems that we reduce to, i.e., for a node $v$
corresponding to the subproblem $\DP(S, \L)$, if we are applying the trivial
(resp. line) operation along a segment (resp. line) $\ell$, then we  add children
corresponding to the DP subproblems $\DP(S_1, S_1\cap\L)$ and $\DP(S_2,
S_2\cap\L)$, where $S_1$ and $S_2$ are the connected components of $S\backslash
\ell$. Similarly if we apply the add operation on $v$ with the set of segments
$\L'$ then we add the child node corresponding to the subproblem $\DP(S,
\L\cup\L')$.

\paragraph{First level.}
We start with the root $\DP([0,n]\times[0,4n^2],\emptyset)$. We apply the line
operation for each vertical line that corresponds to a (vertical) grid line
of level 0. Consider one of the resulting subproblems $\DP
(S,\emptyset)$.  Suppose that there are more than $\eps^{-3}$ line
segments from $\OPT$ of level 0 inside $S$. We want to partition $S$ into
smaller rectangles, such that within each of these rectangles $S'$ at most $O
(\eps^{-3})$ of these level 0 line segments start or end.  This will make it
easier for us to guess them. To this end, we consider the line segments from
$\OPT$ of level 0 inside $S$, take their endpoints and order these endpoints
non-decreasingly by their $y$-coordinates. Let $p_{1},p_{2},...,p_{k}$ be
these points in this order.  For each $k'\in\N$ with $k'/\eps^3\le k$, we
consider the point $p_{k'/\eps^3}$. Let $\ell'$ be the horizontal line
that contains $p_{k'/\eps^3}$. We apply the line operation~to $\ell'$.

\begin{clm}
    \label{clm:few-intersecting}Let $\DP(S',\emptyset)$ be one of the
    subproblems after applying the operations above. There are at most
    $\eps^{-3}$ line segments $\L'$ (horizontal or vertical) from $\OPT$ of
    level 0 that have an endpoint inside $S'$.
\end{clm}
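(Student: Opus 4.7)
The plan is to count the endpoints of level-$0$ $\OPT$ segments lying inside~$S'$ and then apply a trivial pigeonhole argument.

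First, I would recall how $S'$ arises. After the vertical line operations along each level-$0$ grid line we obtain a vertical strip~$S$, and then we collect the endpoints of all level-$0$ $\OPT$ segments in~$S$, sort them by increasing $y$-coordinate as $p_1,\dots,p_k$, and apply a horizontal line operation through $p_{k'/\eps^{3}}$ for every $k'\in\N$ with $k'/\eps^{3}\le k$. Thus $S'$ is a horizontal slab of~$S$ bounded above and below either by two consecutive split points $p_{(k'-1)/\eps^{3}}$ and $p_{k'/\eps^{3}}$, or by one such split point together with the top or bottom of~$S$.

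Next, I would count the $p_i$ lying inside~$S'$. Since the line operation removes the splitting line itself from both of its resulting subproblems, none of the chosen $p_{k'/\eps^{3}}$ lies in~$S'$. Consequently the $p_i$ inside~$S'$ are exactly those whose index sits in the open interval $\bigl((k'-1)/\eps^{3},\,k'/\eps^{3}\bigr)$ (with the obvious adjustment at the top/bottom slab), of which there are at most $1/\eps^{3}-1$.

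Finally, since each level-$0$ $\OPT$ segment with an endpoint inside~$S'$ contributes at least one of its endpoints to the list $p_1,\dots,p_k$, the number of such segments is at most $1/\eps^{3}-1<\eps^{-3}$. The main subtlety I anticipate is in handling horizontal level-$0$ segments that cross a vertical boundary of~$S$: such a segment has only one endpoint inside~$S$, but its other endpoint is outside~$S$ and hence outside~$S'$, so it causes no harm as long as the sorted list $p_1,\dots,p_k$ records every level-$0$ endpoint lying in~$S$. Ties in $y$-coordinates only decrease the number of $p_i$ strictly inside~$S'$, so the bound is unaffected.
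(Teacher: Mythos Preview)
Your proposal is correct and is precisely the argument the paper has in mind. In fact, the paper gives no explicit proof of this claim at all: it is stated and then treated as immediate from the construction (sort the level-$0$ endpoints in $S$ by $y$-coordinate and split every $\eps^{-3}$ of them), so your pigeonhole count on the indices between consecutive split points, together with the observation that the splitting lines themselves are excluded from the resulting subproblems, is exactly what is intended.
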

In each resulting subproblem $\DP(S',\emptyset)$, for each vertical line segment
$\ell\in\OPT$ that crosses $S'$, i.e., such that $S'\setminus\ell$ has two
connected components, we apply the line operation for the line that contains
$\ell$. In each subproblem $\DP(S'',\emptyset)$ obtained after this step, we
apply the add operation to the line segments from $\OPT$ of level 0 that
intersects $S''$ (or to be more precise, their intersection with $S''$), i.e.,
to the set $\L':=\{\ell\cap S''\mid\ell\in\OPT\wedge\ell\cap
S''\ne\emptyset\wedge\ell\,\mathrm{is\,of\,level\,}0\}$. Claim~\ref
{clm:few-intersecting} implies that $|\L'|\le \eps^{-3}$. In each obtained
subproblem we apply the trivial operation until it is no longer applicable. We
say that all these operations correspond to level 0.

\paragraph{Subsequent levels.}
Next, we do a sequence of operations that correspond to levels $j=1,2,3,...$.
Assume by induction that for some $j$ each leaf in the current tree $T$
corresponds to a subproblem $\DP(S,\L)$ such that $\ell\cap S\in\L$ for each
line segment $\ell\in\OPT$ of each level $j'<j$ for which $\ell\cap
S\ne\emptyset$. Consider one of these leaves and suppose that it corresponds to
a subproblem $\DP(S,\L)$. We apply the line operation for each vertical line
that corresponds to a (vertical) grid line of level $j$.

Consider a corresponding subproblem $\DP(S',\L)$. Suppose that there more than
$\eps^{-3}$ line segments \ari{(horizontal or vertical)} from $\OPT$ of level
$j$ that have an endpoint inside $S'$. Like above, we consider these endpoints
and we order them non-decreasingly by their $y$-coordinates. Let
$p_{1},p_{2},...,p_{k}$ be these points in this order. For each $k'\in\N$ with
$k'/\eps^3\le k$, we consider the point $p_{k'/\eps^3}$ and apply the line
operation for the horizontal line $\ell'$ that contains $p_{k'/\eps^3}$. If for
a resulting subproblem $\DP(S'',\L)$ there is a vertical line segment
$\ell\in\L$ of some level $j'<j-2$ with an endpoint $p$ inside $S''$, then we
apply the line operation for the horizontal line that contains $p$.

\begin{figure}
    \centering
    \begin{subfigure}[h]{0.45\textwidth}
        \centering
        \includegraphics[width=\textwidth,page=1]{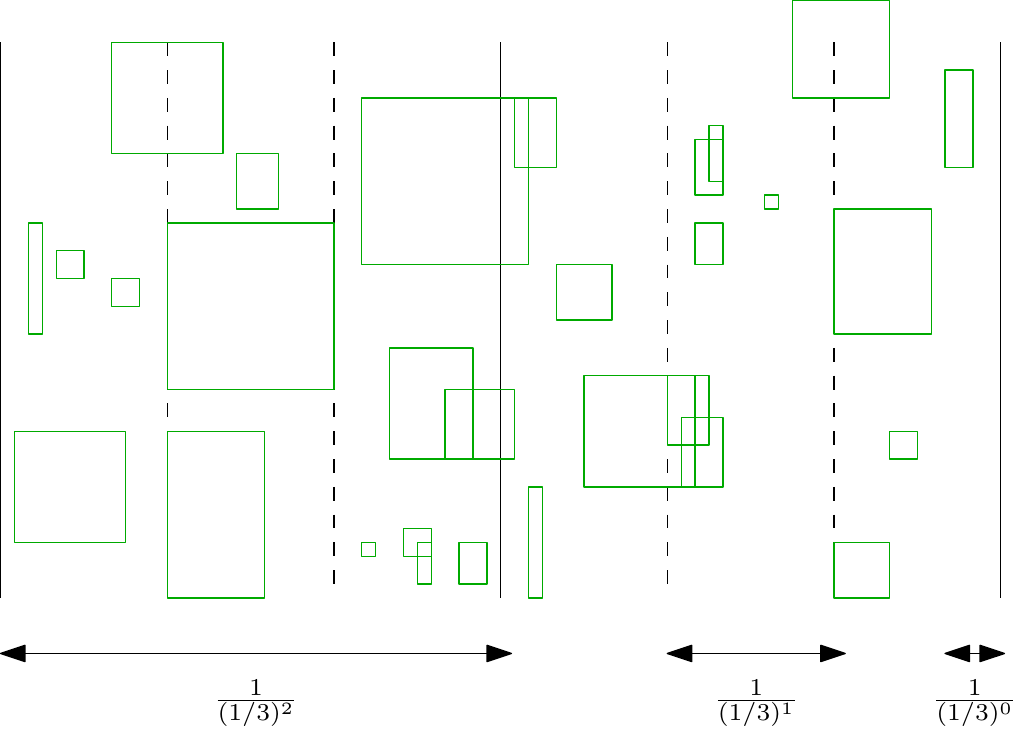}
        \caption{Input Instance}
    \end{subfigure}
    \hfill
    \begin{subfigure}[h]{0.45\textwidth}
        \centering
        \includegraphics[width=\textwidth,page=2]{03horzops}
        \caption{Lines in OPT and line operartions}
    \end{subfigure}
    \caption{Horizontal line operations}
    \label{fig:horzops}
\end{figure}

\begin{clm}
    \label{clm:few-intersecting-1}Let $\DP(S',\L)$ be one of the subproblems
    after applying the operations above. There are at most $\eps^{-3}$ line
    segments $\L'$ (horizontal or vertical) from $\OPT$ of level $j$
    that have an endpoint inside $S'$.
\end{clm}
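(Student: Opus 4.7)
The plan is to adapt the argument of Claim~\ref{clm:few-intersecting} to the inductive setting of level~$j$, tracking how many endpoints of level-$j$ OPT segments remain strictly interior to a subproblem after each of the three kinds of operations applied at level~$j$: the level-$j$ vertical line operations, the horizontal splits at every $\eps^{-3}$-th endpoint, and the horizontal splits through endpoints of lower-level vertical OPT segments stored in $\L$.

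First, I would observe that after applying the line operation for each vertical grid line of level~$j$, the leaf cell $S$ is partitioned into vertical strips whose widths equal the spacing $\eps^{j-2}$ between consecutive level-$j$ grid lines. Since every level-$j$ OPT segment has length at most $\eps^{j-1}<\eps^{j-2}$, each such segment crosses at most one level-$j$ vertical grid line, and each of its endpoints lies in the interior of at most one resulting strip; endpoints that sit exactly on a grid line are on the boundary of the strips and contribute to no strip's interior. Thus this step merely distributes the level-$j$ OPT endpoints among the strips without duplication.

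Next, within each strip $\DP(S',\L)$ that still contains more than $\eps^{-3}$ interior level-$j$ endpoints $p_{1},\ldots,p_{k}$ (sorted non-decreasingly by $y$-coordinate), applying a horizontal line operation through each $p_{k'/\eps^{3}}$ with $k'/\eps^{3}\le k$ cuts $S'$ at these $y$-coordinates. Between two consecutive cuts, exactly $\eps^{-3}-1$ of the $p_{i}$ lie strictly in between, and the cut points themselves lie on the new boundaries. The pieces above the topmost and below the bottommost cuts each contain at most $\eps^{-3}-1$ interior points by the same counting. Hence each resulting sub-strip has strictly fewer than $\eps^{-3}$ interior level-$j$ endpoints of OPT.

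Finally, the third phase applies additional horizontal line operations through the $y$-coordinate of every endpoint $p$ of a vertical OPT segment $\ell\in\L$ of some level $j'<j-2$ lying in the current sub-strip. These operations only further subdivide existing regions horizontally and cannot introduce any new level-$j$ endpoints, so the $\eps^{-3}$ bound established in the previous paragraph is preserved. The only delicate point throughout is the boundary accounting: cutting through a point $p_{i}$ removes it from the interior of either child subregion, so interpreting ``inside $S'$'' as the open interior is what makes the inequalities work out cleanly.
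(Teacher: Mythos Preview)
Your argument is correct, and it is essentially the intended one; the paper states Claim~\ref{clm:few-intersecting-1} without proof, treating it as immediate from the construction. Your write-up simply makes explicit what the paper leaves implicit: the horizontal line operations at the points $p_{\eps^{-3}},p_{2\eps^{-3}},\dots$ partition the strip so that each resulting piece contains fewer than $\eps^{-3}$ of the $p_i$ in its interior, and since every level-$j$ OPT segment with an interior endpoint contributes at least one such $p_i$, the number of such segments is bounded by $\eps^{-3}$. The additional horizontal cuts through endpoints of low-level vertical segments in $\L$ only refine the partition and cannot raise the count.

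One minor remark: the observation in your first paragraph that level-$j$ segments have length at most $\eps^{j-1}<\eps^{j-2}$ and therefore cross at most one level-$j$ grid line is true but not actually needed here, since the quantity being tracked is endpoints (which are points and hence land in exactly one strip regardless of segment length). It does no harm, but you could drop it without affecting the argument.
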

Consider a resulting subproblem $\DP(S'',\L)$. For each line segment
$\ell\in\OPT$ such that $\ell$ crosses $S''$, i.e., $S''\setminus\ell$ has two
connected components, we apply the line operation to the line that contains
$\ell$. We apply the trivial operation until it is no longer applicable. In
each subproblem $\DP(S''',\L)$ obtained after this step, we apply the add
operation to the line segments of level $j$ that have an endpoint in $S'''$,
i.e., to the set $\L':=\{\ell\cap S'''\mid\ell\in\OPT\wedge\ell\cap
S'''\ne\emptyset\wedge\ell\,\mathrm{is\,of\,level\,}j\}$.

As an example, look at Figure~\ref{fig:horzops}, with $\eps=1/3$. The solid
lines in it are of level 0, the dashed lines of level 1, and dotted lines of
level 2.  Also the black lines are vertical grid lines, the blue lines are
(well-aligned) lines in $\OPT$ and the red lines are lines along which
horizontal line operations are applied. It can be seen from this example that a
segment of level $j$, by virtue of it being well-aligned, will get removed by a
trivial operation of level less than $j+3$.

\subsection{Analysis of DP-decision tree}

We want to prove that the resulting tree $T$ is indeed a DP-decision-tree
corresponding to a solution of cost at most $(1+\eps)\OPT$. To this end,
first we need to show that whenever we apply the add operation to a subproblem
$\DP(S,\L)$ for a set $\L'$ then $|\L|+|\L'|\le3\eps^{-3}$.  The {\em key insight}
for this is that if we added a line segment $\ell\in\OPT$ of some level $j$,
then it will not be included in the respective set $\L$ of later subproblems of
level $j+3$ or higher since $\ell$ is well-aligned. More precisely, if $\ell$
is horizontal then its $x$-coordinates are aligned with the grid lines of level
$j+3$. Hence, if $\ell$ or a part of $\ell$ is contained in a set $\L$ of some
subproblem $\DP(S,\L)$ for some level $j+3$, then we applied the trivial
operation to $\ell$ and thus $\ell$ ``disappeared'' from $\L$.
If $\ell$ is vertical and it appears in a $\DP(S,\L)$ for some level $j+3$ then
we applied the line operation to the horizontal lines that contain the two
endpoints of $\ell$. Afterwards, we applied the trivial operation to $\ell$
until $\ell$ ``disappeared'' from $\L$.

In particular, for each subproblem $\DP(S,\L)$ constructed by operations of
level $j$, the set $\L$ can contain line segments of levels $j-2,j-1$, and $j$;
but no line segments of a level $j'$ with $j'<j-2$. We make this and some other
technicalities formal in the proof of the next lemma.

\begin{lem}
    The constructed tree $T$ is a DP-decision-tree.
\end{lem}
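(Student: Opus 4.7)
The plan is to verify, by induction on the level $j$ being processed, that after all operations of levels $0,1,\ldots,j$ the partial tree $T$ satisfies the three structural requirements of a DP-decision-tree: leaves correspond to subproblems $\DP(S,\L)$ where $\L$ already stabs every $R_i\in\R$ with $R_i\subseteq S$; single-child nodes arise from add operations with $|\L|+|\L'|\le 3\eps^{-3}$; and two-child nodes arise from valid trivial or line operations. I will carry the invariant that at each current leaf $\DP(S,\L)$, $\L$ equals exactly $\{\ell\cap S:\ell\in\OPT,\,\mathrm{level}(\ell)\in\{j-2,j-1,j\}\cap\N_0,\,\ell\cap S\ne\emptyset\}$, and moreover every $\ell\in\OPT$ of level $j'\le j$ that stabs some $R_i\subseteq S$ satisfies $\ell\cap S\in\L$---so any earlier-level stabber absent from $\L$ must already have been separated from every rectangle it stabs.

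The heart of the argument, and the step I expect to be the main obstacle, is the \emph{disappearance} claim: whenever we process level $k+3$, every level-$k$ segment currently in $\L$ is eliminated by a trivial operation before the level-$(k+3)$ add is applied. For a horizontal level-$k$ segment $\ell$, well-alignedness places both $x$-endpoints of $\ell$ on grid lines of level $k+3$; after the line operations along \emph{all} vertical grid lines of level $k+3$ applied at the very start of the level's processing, the sub-segment $\ell\cap S$ spans the full width of its containing subproblem, and a trivial operation removes it. For a vertical level-$k$ segment $\ell$, well-alignedness places both $y$-endpoints at multiples of $\eps^{k+1}$; since the algorithm explicitly applies horizontal line operations through every endpoint of every vertical segment in $\L$ of level $j'<(k+3)-2=k+1$ while processing level $k+3$, $\ell$ ends up spanning the full height of its containing subproblem and is likewise removed by a trivial operation. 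This requires unpacking the extra horizontal line operations defined in Section~\ref{sec:DPdesc} and checking that they target exactly the right endpoints.

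Granting disappearance, the add-operation size bound becomes immediate: just before the level-$j$ add in any subproblem, $\L$ contains only segments of levels $j-2$ and $j-1$ (level $j-3$ has just disappeared), which by Claim~\ref{clm:few-intersecting-1} contribute at most $\eps^{-3}$ each; the level-$j$ add inserts at most $\eps^{-3}$ further segments by the same claim, so $|\L|+|\L'|\le 3\eps^{-3}$ as required. The trivial- and line-operation signatures track the definitions of Section~\ref{sec:DPdesc} by construction: trivial operations are applied only to segments $\ell\in\L$ that split the current $S$ into two connected components with children $\DP(S_i,\L\cap S_i)$, and line operations are applied only along a vertical grid line, a horizontal line through a designated endpoint, or a line supporting an OPT segment that crosses $S$, always producing the required $\L_1\cup\L_2=\L$ decomposition.

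Finally, for the leaf condition, the induction terminates at some level $j_{\max}=O(\log_{1/\eps}(n/\eps))$, because every $\ell\in\OPT$ has length at least the minimum rectangle width $\eps/n$ (Lemma~\ref{lem:preprocess}) and hence a well-defined finite level in $\N_0$. At a terminal leaf $\DP(S,\L)$, each $R_i\subseteq S$ is stabbed by some $\ell\in\OPT$, and the second clause of the invariant forces $\ell\cap S\in\L$: otherwise $\ell$ would have been removed by a trivial operation splitting some ancestor region along $\ell$ itself, which (since $\ell$ passes through the interior of $R_i$) would have split $R_i$ too, contradicting $R_i\subseteq S$. Hence $\L$ stabs every $R_i\subseteq S$, so the DP would assign $\SOL(S,\L)=\emptyset$ at this leaf, matching the base case in the DP-decision-tree definition.
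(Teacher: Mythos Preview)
Your proposal is correct and takes essentially the same approach as the paper: the core is the ``disappearance'' argument (well-aligned level-$k$ segments are eliminated by trivial operations once the level-$(k+3)$ vertical grid-line operations and, for vertical segments, the horizontal endpoint line operations have been applied), followed by the $3\times\eps^{-3}$ count using Claims~\ref{clm:few-intersecting} and~\ref{clm:few-intersecting-1}. The paper's own proof treats only the add-operation size bound explicitly and leaves the remaining DP-decision-tree requirements implicit; your version is more thorough in that it also verifies the leaf condition via the observation that any trivial or line operation removing an $\OPT$-segment $\ell$ necessarily separates every rectangle $\ell$ stabs from the descendant subproblems. One minor caveat: your stated invariant that $\L$ \emph{equals exactly} the restrictions of the last three levels' $\OPT$-segments is slightly too strong, since a level-$(j-1)$ or level-$(j-2)$ segment could coincidentally align with a coarser grid line and be removed early; but you only use the containment direction for the size bound and the separation argument for the leaf condition, so this does not affect the proof.
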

\begin{proof}
    We first notice that any reduction in the tree corresponding to a trivial
    or line operation on $\DP(S, \L)$ can only lead to subproblems of the form
    $\DP(S', \L')$ with $\L'\subseteq\L$.  Hence, such reductions cannot lead
    to a contradiction of any DP-decision tree property.

    We now consider the add operations to ensure that any newly created node,
    by applying the add operation on set $\L':=\{\ell\cap S' \mid \ell\in\OPT
    \wedge\ell\cap S'\ne\emptyset \wedge \ell\mathrm{\,is\,of\,level\, }j\}$
    to subproblem $\DP(S,\L)$, maintains the property that
    $|\L|\cup|\L'|\le3\eps^{-3}$. We do this by induction,
    \begin{description}
        \item[Base case:] For the sequence of operations of level $j<3$, we
            know by Claim~\ref{clm:few-intersecting} and
            Claim~\ref{clm:few-intersecting-1} that the `added' set $\L'$
            always satisfied $|\L'|\le\eps^{-3}$. Since these are the only
            added sets in the first 3 levels, we can be sure that
            $|\L|+|\L'|\le3\eps^{-3}$.
        \item[Hypothesis:] We assume that all add operations in the sequence of
            operations corresponding to level $j-1$, satisfied the property
            that $|\L|+|\L'|\le3\eps^{-3}$.
        \item[Induction:] Consider an add operation for segments in $\OPT$ of
            level $j$.  Any such operation was preceded by a series of line
            operations along grid lines of level $j$ and all viable trivial
            operations. The grid lines of level $j$ are separated by
            $\eps^{j-2}$ units, and hence any horizontal segment from $\OPT$
            in $\L$ of level $j'<j-2$ (which have length $\ge\eps^{j-2}$
            and being well aligned, completely cut across any cell they
            intersect) can have the trivial operation applied to them, and be
            removed from $\L$. Similarly, for vertical segments in $\OPT$ of level
            $j'<j-2$, we explicitly apply the line operation to the horizontal
            lines that contain its end points, and hence again these also get
            removed from $\L$ when we apply all possible trivial operations.
            Thus we are left only with segments added in operations of level
            $j-1$ and  $j-2$ in  $\L$.  But we know from
            Claim~\ref{clm:few-intersecting-1} that at each level we add at
            most  $\eps^{-3}$ segments. Thus application of an add
            operation at this stage ensures that $|\L|+|\L'|\le3\eps^{-3}$.
        \qedhere
    \end{description}
\end{proof}
We want to show that the cost of the solution corresponding to $T$ is at most
$(1+O(\eps))\OPT$. In fact, depending on the offset $a$ this might or might
not be true. However, we show that there is a choice for $a$ such that this is
true (in fact, we will show that for a random choice for $a$ the cost will be
at most $(1+O(\eps))\OPT$ in expectation). Intuitively, when we apply the
line operation to a vertical grid line $\ell$ of some level $j$ then the
incurred cost is at most $O(1)$ times the cost of the line segments from $\OPT$
of level $j$ or larger that stab at least one rectangle intersected by $\ell$.
A line segment $\ell'\in\OPT$ of level $j$ stabs such a rectangle only if
$\ell'$ is intersected by $\ell$ (if $\ell'$ is horizontal) or the
$x$-coordinate of $\ell'$ is close to $\ell$ (if $\ell'$ is vertical). Here we
use that $h_{i}\ge w_{i}$ for each rectangle $R_{i}\in\OPT$.

Thus, we want to bound the total cost over all levels $j$ of the line segments
from $\OPT$ that are in level $j$ and that are intersected or close to grid
lines of level $j$ or smaller. We will show that if we choose $a$ randomly then
the total cost of such grid lines is at most $\eps\cdot\OPT$ in
expectation. Hence, by using the constant approximation algorithm from
\cite{ChanD0SW18} in expectation the total cost due to all line operations for
vertical line segments is at most $O(\eps)\cdot\OPT$.

When we apply the line operation for a horizontal line, then the cost of
stabbing the corresponding rectangles is at most the width of the rectangle $S$
of the current subproblem $\DP(S,\L)$. We will charge this cost to the line
segments of $\OPT$ inside $S$ of the current level or higher levels. We will
argue that we can charge each such line operation to line segments from $\OPT$
whose total width is at least $1/\eps$ times the width of $S$. This costs
another $O(\eps)\cdot\OPT$ in total due to all applications of line
operations for horizontal line segments.

The add operation yields a cost of exactly $\OPT$ and the trivial operation
does not cost anything. This yields a total cost of $(1+O(\eps))\OPT$.

Formally, we prove this in the following lemma.

\begin{lem} \label{lem:errorbound}
    There is a choice for the offset $a$ such that the solution
    $\overline{\SOL}([0,n]\times[0,4n^2],\emptyset)$ in $T$ has a cost of at most
    $(1+O(\eps))\OPT$.
\end{lem}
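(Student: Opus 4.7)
The plan is to pick the offset $a$ uniformly at random in $[0,\eps^{-2})$ and to show
\[
\Exp_{a}\bigl[c(\overline{\SOL}([0,n]\times[0,4n^{2}],\emptyset))\bigr]\le(1+O(\eps))\OPT,
\]
from which the statement follows by averaging. I would decompose the cost along $T$ into three kinds of contributions. Trivial operations introduce nothing new. Add operations introduce (parts of) well-aligned OPT segments and sum across the entire tree to at most $\OPT$. Each line operation contributes the $O(1)$-approximate stabbing $\L(\R_{\ell})$. It then suffices to bound the total cost of all the $\L(\R_{\ell})$'s by $O(\eps)\OPT$ in expectation.

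The key structural observation, which I would verify inductively along the tree, is the vanishing property already used in the previous lemma: in any level-$j$ subproblem $S$, every rectangle $R\subseteq S$ is stabbed in $\OPT$ by some segment of level $\ge j-2$, because any lower-level stabber $\ell''$ of $R$ is well-aligned and would have been eliminated by a trivial operation at level $\text{level}(\ell'')+3\le j$, thereby removing $R$ from $S$ altogether. Consequently, for a line $\ell$ of level $j$, the cost $c(\OPT(\R_{\ell}))$ is bounded above by the total length of OPT segments of level $\ge j-2$ that stab some rectangle in $\R_{\ell}$.

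Next I would charge these contributions back to OPT, handling both the vertical grid lines and the horizontal ``balancing'' lines in the same spirit. For a horizontal $\ell'\in\OPT$, it gets charged at a level-$j$ vertical grid line $\ell$ only when $\ell\cap\ell'\ne\emptyset$, which over random $a$ happens for an expected $|\ell'|/\eps^{j-2}$ of the level-$j$ grid lines. For a vertical $\ell'\in\OPT$, the hypothesis $h_{i}\ge w_{i}$ gives $w_{R}\le h_{R}\le|\ell'|$ for every $R$ stabbed by $\ell'$, so the charging line $\ell$ must lie within $x$-distance $|\ell'|$ of $\ell'$ and the same $O(|\ell'|/\eps^{j-2})$ expected bound applies. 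The horizontal balancing lines are handled symmetrically, using the endpoint-balancing rule ($\eps^{-3}$ endpoints per line) to control their total number. Summed over the $O(1)$ levels at which $\ell'$ is still alive (from its own level $j'$ up to $j'+2$, after which it has vanished), the per-segment expected charge collapses to $O(\eps|\ell'|)$ by geometric series, giving $O(\eps)\OPT$ overall.

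The main obstacle I anticipate is precisely this last per-level summation. For an OPT segment $\ell'$ of level $j'$, the naive per-level charge at level $j=j'+2$ is $O(|\ell'|/\eps)$; summed over all OPT segments this would give $O(\OPT/\eps)$ rather than the desired $O(\eps\OPT)$. Obtaining the correct bound requires carefully exploiting the well-aligned property together with the structure of the level-$(j'+2)$ subproblems: after all applicable trivial operations have been performed, $\ell'$ only remains in subproblems where it does not span the full width, and in such a subproblem only $O(1)$ grid lines of level $j'+2$ can actually be close enough to $\ell'$ to charge against it. Making this precise is the crux of the argument; everything else reduces to a routine linearity-of-expectation and geometric-series calculation.
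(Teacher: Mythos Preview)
Your high-level plan (random offset, decompose the cost into add/trivial/line contributions, show line operations cost $O(\eps)\OPT$ in expectation) is exactly the paper's strategy. However, the execution diverges from the paper's in two places, and this is why you hit the obstacle you describe.

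\textbf{Horizontal balancing lines.} The paper does \emph{not} handle these ``symmetrically'' to the vertical grid lines; there is no randomness in the $y$-direction to exploit. Instead the paper uses a direct, deterministic charging argument: a horizontal line operation at level $j$ lives in a cell of width at most $\eps^{j-2}$, so a single segment of that length stabs every rectangle it catches; on the other hand, the cell it creates contains either $\eps^{-3}$ endpoints of level-$j$ OPT segments (total OPT-length $\ge\eps^{-3}\cdot\eps^{j}=\eps^{j-3}$) or one vertical OPT segment of level $<j-2$ (length $>\eps^{j-3}$). Hence each horizontal line operation costs at most an $\eps$-fraction of the OPT content it is charged to, and every OPT endpoint is charged at most twice, giving $2\eps\cdot\OPT$ in total. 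This endpoint-based charging is the piece missing from your sketch.

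\textbf{Vertical grid lines.} The paper does not sum contributions of $\ell'\in\OPT_{j'}$ over levels $j',j'+1,j'+2$. It uses a \emph{single} indicator $I_{\ell'}$ (respectively $J_{\ell'}$): the event that a grid line of level $\le j'$ intersects (respectively lies within $\pm\eps^{j'-1}$ of) $\ell'$, which has probability at most $\eps$ (respectively $2\eps$) over the random offset. The justification, stated informally before the lemma, is that at a level-$j$ grid line only OPT segments of level $\ge j$ are charged, so from $\ell'$'s perspective only levels $j\le j'$ matter and the level-$j'$ grid dominates. This single-event viewpoint is precisely what circumvents the level-$(j'+2)$ blow-up you identified; your proposed fix (arguing that only $O(1)$ level-$(j'+2)$ grid lines are close enough) is not needed once you adopt the paper's accounting.
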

\begin{proof}
    In the tree as defined above, the add operations are only applied on
    segments from $\OPT$, and hence the cost across all such add operations is
    at most $c(\OPT)$.  Similarly, the trivial operations are applied on
    segments which were `added' before, and hence their cost is also already
    accounted for. So we are left with analyzing the cost of stabbing the
    rectangles which are intersected by the lines along which we apply the
    line operations. We claim that for a random offset $a$, this cost is
    $O(\eps\cdot\OPT)$, which would give us the required result.

    Let us first consider any line operation of level $j$ that is applied to a
    horizontal line $\ell$. This operation would create 2 cells of width at
    most $\eps^{j-2}$, one of which either contains $\eps^{-3}$ endpoints of
    segments \ari{(horizontal or vertical)} from $\OPT$ of level $j$; or
    contains \ari{at least one} vertical segment from $\OPT$ of level $j'<j-2$,
    i.e., the cost of the segments from $\OPT$ with at least one endpoint in
    this cell is at least $\eps^{-3}\cdot\eps^j = \eps ^{j-3}$.  Since a
    segment of width $\eps^{j-2}$ (width of cell) is sufficient to stab all
    rectangles stabbed by $\ell$, we see that this horizontal line takes only
    $\eps$ times the cost of the segments in $\OPT$ with at least one endpoint
    in the cell. We charge the cost of this horizontal segment to these
    corresponding endpoints. Since each such segment in $\OPT$ of level $j$ can
    be charged at most twice, by summing over all horizontal line operations
    over all levels we get that the cost of such line operations is at most
    $2\eps\cdot\OPT$.

    Now, let us consider the line operations applied to vertical grid lines.
    We wish to bound the cost of stabbing all the rectangles intersected or
    \textit{close to} grid lines (will be formally defined shortly), over all
    levels $j$. This as mentioned above can also be stated as bounding the
    cost, over all levels $j$, of line segments in level $j$ of $\OPT$ (call
    this set $\OPT_j$) intersected or close to grid lines of level $j$ or
    smaller. For a horizontal segment  $\ell\in\OPT_j$, let $I_\ell$ be the
    indicator variable representing the event that a grid line of level $j$ or
    smaller intersects $\ell$ ($I_\ell=0$ for vertical segments). Since
    $|\ell|\le\eps^{j-1}$, if we take a random offset $a$, we can say that,
    \[ \Exp[I_\ell] \le \frac{\eps^{j-1}}{\eps^{j-2}} = \eps.\]
    For a vertical segment $\ell\in\OPT_j$, let $J_\ell$ be the indicator
    variable representing the event that a grid line of level $j$ or smaller
    intersect the rectangle stabbed by $\ell$ ($J_\ell=0$ for horizontal
    segments). Since for $j>0$, $|\ell|\le\eps^{j-1}$, we know that for the
    rectangle stabbed by $\ell$, the dimensions satisfy $w_i\le h_i\le
    \eps^{j-1}$.  This means that to stab such a rectangle, $\ell$ has to
    lie \textit{close to}, i.e., within $\pm\eps^{j-1}$ of the vertical
    grid line. So for a random offset $a$ and level $j>0$ we can say that:
    \[ \Exp[J_\ell] \le \frac{2\eps^{j-1}}{\eps^{j-2}} = 2\eps.\]
    For level 0, we note that even though the vertical segments can be very
    long, the maximum width of a rectangle is at most 1. So $\ell$ has to lie
    within $\pm1$ of the grid line, giving us:
    \[ \Exp[J_\ell] \le \frac{2}{\eps^{-2}} = 2\eps^2\le 2\eps.\]
    With the expectations computed above, we can upper bound the expected cost
    of segments in $\OPT$ intersected by vertical line operations as:
    \begin{align*}
       \Exp\left[\sum_j\sum_{\ell\in\OPT_j} (I_\ell+J_\ell)\cdot|\ell|\right]
        &= \sum_j\sum_{\ell\in\OPT_j}
        \Exp\left[(I_\ell+J_\ell)\cdot|\ell|\right] \\
        &= \sum_j\sum_{\ell\in\OPT_j} |\ell|\cdot(\Exp[I_\ell]+\Exp[J_\ell]) \\
        &\le \sum_j\sum_{\ell\in\OPT_j} |\ell|\cdot (\eps + 2\eps) \\
        &=  3\eps\cdot\OPT
    \end{align*}
    Now, by using the $\alpha$-approximation algorithm for stabbing from
    \cite{ChanD0SW18}, where $\alpha$ is a constant, the solution returned by
    our algorithm takes an additional cost of $3 \alpha \cdot\eps\cdot\OPT$.
\end{proof}

Now we prove our main theorem.

\begin{thm}
    \label{thm:DP}There is a $(1+\eps)$-approximation algorithm
    for the general rectangle stabbing problem with a running time of
    $(n/\eps)^{O(1/\eps^3)}$, assuming that $h_{i}\ge w_{i}$ for each
    rectangle $R_{i}\in\R$.
\end{thm}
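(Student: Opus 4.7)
The plan is to simply package together the lemmas already established: preprocessing, the DP-decision-tree construction, the structural lemma asserting $T$ is a valid DP-decision-tree, the expected-cost analysis, and the runtime lemma. First, I would invoke Lemma~\ref{lem:preprocess} to pass to a preprocessed instance $\R'$ whose optimum is within a $(1+O(\eps))$-factor of the original optimum and whose coordinates are nicely discretized, so that the DP cells $\DP(S,\L)$ are well-defined and polynomially bounded in the sense used by Lemma~\ref{lem:runtime}.

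Next, I would construct the tree $T$ exactly as in Section~\ref{sec:DPdesc} (well-aligning OPT at a loss of $(1+O(\eps))$, then performing the level-$0$, level-$1, 2, \dots$ operations around the hierarchical grid with offset $a$). The preceding lemma established that this $T$ satisfies all properties of a DP-decision-tree (in particular $|\L|+|\L'|\le 3\eps^{-3}$ at every add operation, because well-alignment makes a segment of level $j$ vanish via trivial or explicitly-applied line operations by the time we reach level $j+3$). Then Lemma~\ref{lem:errorbound} provides a choice of the offset $a$ for which $c(\overline{\SOL}([0,n]\times[0,4n^2],\emptyset))\le (1+O(\eps))\OPT$, where the hidden constant comes from the $\alpha$-approximation used in the line operations of \cite{ChanD0SW18} and from the expected intersection/closeness probabilities $\eps$ and $2\eps$ computed in the proof of that lemma.

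Feeding this $T'$ into Lemma~\ref{lem:runtime} then gives us a DP-computed solution of cost at most $c(\overline{\SOL}([0,n]\times[0,4n^2],\emptyset)) \le (1+O(\eps))\OPT$, and the same lemma bounds the running time by $(n/\eps)^{O(1/\eps^3)}$. Finally, to get exactly a $(1+\eps)$-approximation (as opposed to $(1+c\eps)$ for some absolute constant $c$) I would simply rescale: run the algorithm with parameter $\eps' := \eps/c$ for a sufficiently large constant $c$ absorbing the losses from preprocessing, well-alignment, and the three $\eps$-terms of Lemma~\ref{lem:errorbound}. This rescaling only changes the exponent in the running time by a constant factor and leaves the bound $(n/\eps)^{O(1/\eps^3)}$ intact.

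There is no real obstacle left in this proof; all the geometric and probabilistic work has already been carried out in the preceding lemmas, and the only thing to watch is the bookkeeping that turns the additive $O(\eps)\OPT$ error terms (from horizontal line operations, vertical line operations, well-alignment, and preprocessing) into a single $(1+\eps)$ multiplicative guarantee via the constant rescaling of $\eps$. I would state the theorem for the $h_i\ge w_i$ case as written, noting that the argument uses the hypothesis $h_i\ge w_i$ precisely in the bound $\Exp[J_\ell]\le 2\eps$ from Lemma~\ref{lem:errorbound}, where the rectangles stabbed by a level-$j$ vertical segment must have width at most $\eps^{j-1}$.
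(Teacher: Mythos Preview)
Your proposal is correct and follows essentially the same approach as the paper: the paper's own proof of Theorem~\ref{thm:DP} is just two sentences that cite the DP of Section~\ref{sec:DPdesc}, the running-time bound of Lemma~\ref{lem:runtime}, and the cost bound of Lemma~\ref{lem:errorbound}. Your write-up is simply a more explicit version of this packaging, and the additional $\eps$-rescaling step you include is standard and appropriate.
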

\begin{proof}
We gave a DP algorithm in Section~\ref{sec:DPdesc} which was shown to have the
required running time in Lemma~\ref{lem:runtime}. Further in Lemma~\ref
{lem:errorbound} we showed the correctness and that the solution computed by
the DP is actually a $1+O(\eps)$ approximation of the solution.
\end{proof}
Theorem~\ref{thm:DP} has some direct implications. First, it yields
a PTAS for the general square stabbing problem.
\begin{cor}
    There is a PTAS for the general square stabbing problem.
\end{cor}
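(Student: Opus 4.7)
The plan is to obtain the corollary as an immediate specialization of Theorem~\ref{thm:DP}. The general square stabbing problem (\sqstab) is exactly the restriction of \orthstab~to input instances in which every rectangle $R_i\in\R$ is a square, i.e., $w_i = h_i$. In particular, the hypothesis $h_i \ge w_i$ required by Theorem~\ref{thm:DP} is satisfied (with equality) for every input rectangle.

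Concretely, I would argue as follows. Given any instance $\R$ of \sqstab~and any $\eps>0$, we simply invoke the dynamic programming algorithm from Section~\ref{sec:DPdesc} on $\R$. The preprocessing from Lemma~\ref{lem:preprocess} only uses the assumption $h_i\ge w_i$ in the stretching step of Property~(iii), which remains valid here; note that stretching the $y$-coordinates does not in general preserve squareness, but this is harmless because the DP and its analysis in Sections~\ref{sec:DPdesc}--\ref{sec:DP} only rely on $h_i\ge w_i$ and not on $w_i = h_i$. By Lemma~\ref{lem:runtime} the algorithm runs in time $(n/\eps)^{O(1/\eps^3)}$, and by Lemma~\ref{lem:errorbound} it returns a solution of cost at most $(1+O(\eps))\OPT$. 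Rescaling $\eps$ by a constant factor yields the desired $(1+\eps)$-approximation, which is a PTAS.

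There is really no main obstacle here: the corollary is a direct instantiation, and the only thing worth remarking on is that every step in the proof of Theorem~\ref{thm:DP} (preprocessing, well-alignment, the hierarchical grid, the charging argument in Lemma~\ref{lem:errorbound}) uses $h_i\ge w_i$ only as an inequality, so the equality case of squares is covered without modification.
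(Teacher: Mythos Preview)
Your proposal is correct and matches the paper's approach exactly: the corollary is stated immediately after Theorem~\ref{thm:DP} as a direct implication, with no additional argument beyond the observation that squares trivially satisfy $h_i \ge w_i$. Your extra remark that the preprocessing may destroy squareness but preserves $h_i \ge w_i$ is a nice clarification the paper leaves implicit.
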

Also, it yields a $(2+\eps)$-approximation algorithm for the general rectangle
stabbing problem for arbitrary rectangles: we can simply split the input into
rectangles $R_{i}$ for which $h_{i}\ge w_{i}$ holds, and those for which
$h_{i}<w_{i}$ holds, and output the union of these two solutions.
\begin{cor}
    There is a $(2+\eps)$-approximation algorithm for the general rectangle
    stabbing problem with a running time of $(n/\eps)^{O(1/\eps^3)}$.
\end{cor}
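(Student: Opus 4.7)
The plan is to reduce the general case to two invocations of the PTAS from Theorem~\ref{thm:DP}, one for each ``orientation'' of rectangles. Specifically, I would partition the input $\R$ into $\R_{\mathrm{tall}} := \{R_i \in \R : h_i \geq w_i\}$ and $\R_{\mathrm{wide}} := \{R_i \in \R : w_i > h_i\}$. On $\R_{\mathrm{tall}}$ the hypothesis of Theorem~\ref{thm:DP} is satisfied verbatim, so running the DP yields a set of segments $\L_1$ stabbing all rectangles in $\R_{\mathrm{tall}}$ with $c(\L_1) \leq (1+\eps)\,\mathrm{OPT}(\R_{\mathrm{tall}})$.

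For $\R_{\mathrm{wide}}$ I would apply the symmetry of the problem under a $90^\circ$ rotation: swap the roles of the $x$- and $y$-coordinates of every rectangle, producing an instance $\R_{\mathrm{wide}}'$ in which the ``new'' heights dominate the ``new'' widths. The stabbing problem is invariant under this transformation (horizontal and vertical segments are interchanged, and lengths are preserved), so Theorem~\ref{thm:DP} applies to $\R_{\mathrm{wide}}'$ and returns a solution of cost at most $(1+\eps)\,\mathrm{OPT}(\R_{\mathrm{wide}}')$. Rotating the solution back gives a set $\L_2$ stabbing $\R_{\mathrm{wide}}$ with $c(\L_2) \leq (1+\eps)\,\mathrm{OPT}(\R_{\mathrm{wide}})$. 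Preprocessing from Lemma~\ref{lem:preprocess} should be applied separately to each of the two sub-instances before running the DP.

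The algorithm outputs $\L := \L_1 \cup \L_2$. Correctness is immediate since every rectangle of $\R$ lies in exactly one of the two subsets and is stabbed by the corresponding solution. For the cost bound, observe that the globally optimal solution $\mathrm{OPT}$ for $\R$ is, in particular, a feasible stabbing for both $\R_{\mathrm{tall}}$ and $\R_{\mathrm{wide}}$, so $\mathrm{OPT}(\R_{\mathrm{tall}}), \mathrm{OPT}(\R_{\mathrm{wide}}) \leq \mathrm{OPT}$. Therefore
\[
c(\L) \leq c(\L_1) + c(\L_2) \leq (1+\eps)\bigl(\mathrm{OPT}(\R_{\mathrm{tall}}) + \mathrm{OPT}(\R_{\mathrm{wide}})\bigr) \leq 2(1+\eps)\,\mathrm{OPT},
\]
which becomes $(2+\eps)\,\mathrm{OPT}$ after rescaling $\eps$ by a constant factor. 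The running time is dominated by two calls to the DP of Theorem~\ref{thm:DP}, hence $(n/\eps)^{O(1/\eps^3)}$.

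There is no real obstacle here; the only point one has to be careful about is checking that the $90^\circ$ rotation preserves the problem (segments remain axis-aligned, lengths are preserved, and stabbing is defined symmetrically in the two axes), and that taking the union of the two solutions never causes double-counting to inflate the bound beyond the sum, which is clear since $c(\L_1 \cup \L_2) \leq c(\L_1) + c(\L_2)$.
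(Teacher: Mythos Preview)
Your proposal is correct and follows exactly the approach the paper outlines in the sentence preceding the corollary: split the input into rectangles with $h_i\ge w_i$ and those with $h_i<w_i$, solve each part with Theorem~\ref{thm:DP} (rotating the second part), and take the union. Your explicit bookkeeping on the rotation and the cost bound $\mathrm{OPT}(\R_{\mathrm{tall}}),\mathrm{OPT}(\R_{\mathrm{wide}})\le\mathrm{OPT}$ just fills in details the paper leaves implicit.
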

Finally, it yields a PTAS for the horizontal rectangle stabbing problem: we can
take the input of that problem and stretch all input rectangles
\ari{vertically} such that it is always very costly to stab any rectangle
vertically (so in particular our $(1+\eps)$-approximate solution would never do
this). Then we apply the algorithm due to Theorem~\ref{thm:DP}.
\begin{cor} \label{cor:horzstabbing}
    There is a $(1+\eps)$-approximation algorithm for the horizontal
    rectangle stabbing problem with a running time of
    $(n/\eps)^{O(1/\eps^3)}$.
\end{cor}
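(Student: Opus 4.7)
The plan is to reduce \horstab~to the case of \orthstab~covered by Theorem~\ref{thm:DP} (the case $h_i\ge w_i$) via a vertical stretching of the input. Every \horstab~solution is also a feasible \orthstab~solution that happens to use only horizontal segments, so if we can make vertical segments prohibitively expensive, the algorithm of Theorem~\ref{thm:DP} applied to the stretched instance is forced to return an all-horizontal solution, which we can output directly as a \horstab~solution.

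Let $\OPT$ denote the optimal cost of the given \horstab~instance. Stabbing each rectangle by its top edge gives $\OPT\le W:=\sum_i w_i$, and after applying the preprocessing of Lemma~\ref{lem:preprocess} we may assume $W\le n$. Assuming $h_i>0$ for all $i$ (any rectangle with $h_i=0$ is itself a horizontal segment and can be stabbed individually and removed), I would pick a scaling factor $M$ satisfying both $M\ge w_i/h_i$ for every $i$ and $M\cdot\min_i h_i>(1+\eps)W$, and replace every $y$-coordinate in the input by $M$ times its original value. Then (i) $Mh_i\ge w_i$ for every stretched rectangle, so Theorem~\ref{thm:DP} applies, and (ii) any single vertical segment stabbing a stretched rectangle has length at least $M\cdot\min_i h_i>(1+\eps)W\ge(1+\eps)\OPT$. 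Moreover, horizontal stabbing is invariant under uniform vertical scaling, since whether a horizontal segment at height $y$ stabs a rectangle depends only on the two $x$-ranges and on whether $y$ lies between the rectangle's top and bottom, both of which are preserved under the stretch. Consequently, any horizontal solution has identical cost in the two instances, and in particular the optimal \orthstab~cost on the stretched instance is at most $\OPT$.

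Now I would run the algorithm of Theorem~\ref{thm:DP} on the stretched instance, obtaining in time $(n/\eps)^{O(1/\eps^3)}$ a solution $\SOL$ with $c(\SOL)\le(1+\eps)\OPT$. By property~(ii) a single vertical segment would already exceed this bound, so $\SOL$ contains no vertical segment and is therefore an all-horizontal stabbing of the stretched rectangles and, by the invariance noted above, also of the original ones. The main point to check is that the potentially large scaling factor $M$ does not inflate the running time; this is immediate from Lemmas~\ref{lem:preprocess} and~\ref{lem:runtime}, since the preprocessing rescales and discretizes the stretched instance into the bounded box $[0,n]\times[0,4n^2]$ independently of its initial geometric extent, and the resulting DP complexity depends only on $n$ and $1/\eps$. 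Returning $\SOL$ therefore yields the desired $(1+\eps)$-approximation in time $(n/\eps)^{O(1/\eps^3)}$.
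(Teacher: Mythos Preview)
Your proposal is correct and follows essentially the same approach as the paper: stretch the instance vertically so that vertical stabbing becomes prohibitively expensive, then invoke Theorem~\ref{thm:DP}. The paper's own justification is a single sentence to this effect; you simply supply the quantitative details (the choice of~$M$ and the invariance of horizontal stabbing under vertical scaling) that the paper leaves implicit.
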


\section{$\delta$-large rectangles}
\label{sec:deltalarge}
We now consider the case of $\delta$-large rectangles for some given constant
$\delta$, i.e., where \aw{for each input rectangle $R_i$ we assume that $w_i
\le 1$ and $h_i \le 1$ and additionally $w_i \ge \delta$ or $h_i \ge
\delta$.}\awr{please check} In other words, none of the rectangles are small
($<\delta$) in both dimensions.  For this case we again give a PTAS in which we
use our algorithm due to Theorem~\ref{thm:DP} as a subroutine.

First, by losing only a factor of $1+\eps$, we divide the instance into
independent subproblems which are disjoint
rectangular cells. For each cell $C_i$, we denote by $\OPT(C_i)$ the cells from
$\OPT$ that are contained in $C_i$ and our routine ensures that
$c(\OPT(C_i))\le O(1/\eps^3)$.  Then for each cell $C_i$, the number of
segments in $\OPT(C_i)$ with length longer than $\delta$ is bounded by
$O({1}/{\delta\eps^3})$. We guess them in polynomial time.  Now, the remaining
segments in $\OPT$ are all of length smaller than $\delta$, and hence they can
stab a rectangle along its shorter dimension.  Hence, we can divide the
remaining rectangles into two disjoint sets, one with $h_i\aw{\ge}w_i$ and the
other with $w_i\ge h_i$, and use Theorem~\ref{thm:DP} to get a $1+\eps$
approximation of the remaining problem. In the following, we describe our
algorithm in detail.

\subsection{Guessing long segments in the solution}
We first start by discretizing the input similar to Lemma~\ref{lem:preprocess}.
\begin{lem} \label{lem:preprocess2}
    By losing a factor
    $(1+O(\eps))$ in the approximation ratio, we can assume for each
    $R_{i}\in\R$ the following properties hold:
    \begin{enumerate}
        \item[(i)] $x_{1}^{(i)},x_{2}^{(i)}, y_{1}^{(i)}, y_{2}^{(i)}$
            are discretized and within $[0, n]$,
        \item[(ii)] each line segment in $\OPT$ has width of at
            most $1/\eps$.
    \end{enumerate}
\end{lem}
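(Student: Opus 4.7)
The plan is to adapt the preprocessing from Lemma~\ref{lem:preprocess} to the $\delta$-large setting, taking advantage of the fact that now both $w_i\le 1$ and $h_i\le 1$. Because both dimensions are bounded, the \emph{stretching} step used previously (which was needed only because heights could be arbitrarily large) can be dropped, and the argument that was used to bound $x$-coordinates can be applied symmetrically in the $y$-direction as well. The cost bookkeeping is identical to that of Lemma~\ref{lem:preprocess}, so I expect a $(1+O(\eps))$ loss to follow by the same telescoping calculation.

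Concretely, I would proceed in three steps. First, a \emph{scaling} step: rescale so that the largest rectangle dimension becomes $1 - 2\eps$, giving $c(\OPT)\ge 1-2\eps$, and stab every rectangle whose width or height is below $\eps/n$ greedily by a single short segment along its small dimension, costing at most $O(\eps)$ in total. The same $\frac{1-\eps}{1-2\eps}=1+O(\eps)$ calculation as before handles the multiplicative loss, and after this step every $R_i$ satisfies $\eps/n\le w_i\le 1$ and $\eps/n\le h_i\le 1$. Second, a \emph{splitting and discretization} step, now applied in both axes: if some $x$-coordinate is not covered by any rectangle, split the instance into independent subproblems at that coordinate; otherwise the $x$-extent of the instance is at most $n(1-2\eps)$, and after translation all $x$-coordinates lie in $[0,n]$. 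The same argument in the $y$-direction (which is available precisely because $h_i\le 1$) puts all $y$-coordinates in $[0,n]$, replacing the stretching step of Lemma~\ref{lem:preprocess}. Then extend each rectangle by at most $\eps/n$ on each side to align its four coordinates with multiples of $\eps/n$, yielding Property~(i) at a further additive cost of $O(\eps)$. Third, for Property~(ii) I would reuse the subsegment-and-extend argument verbatim, but now applied to \emph{both} horizontal and vertical segments in $\OPT$: any segment of length exceeding $1/\eps$ is chopped into pieces of length $1/\eps - 2$, each of which is then extended by at most $2$ on each side so that it fully stabs every rectangle it used to stab. This extension is legitimate in either orientation because $w_i,h_i\le 1$, and the same computation
\[
\frac{1/\eps+\delta+4}{1/\eps+\delta}\le 1+4\eps
\]
bounds the multiplicative blow-up.

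The main (minor) obstacle is checking that the operations compose correctly — in particular, that splitting in $y$ does not disturb the $x$-coordinate bound or the discretization, and that the total multiplicative loss telescopes to $1+O(\eps)$. Both are immediate provided the steps are carried out in the order above (scale, then split in $x$, then split in $y$, then discretize, then shorten $\OPT$-segments), since each subsequent step only extends rectangles or breaks the instance into independent pieces without violating the guarantees established in the earlier steps.
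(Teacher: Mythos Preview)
Your proposal is correct and follows essentially the same route as the paper: exploit $w_i\le 1$ and $h_i\le 1$ to run the splitting-and-discretization argument symmetrically in both axes (dropping the stretching step), and then chop long segments of either orientation into pieces of length $1/\eps-2$ with the same $1+4\eps$ bound. The only difference is that your initial scaling step is superfluous here---the $\delta$-large hypothesis already gives $w_i,h_i\le 1$, so the paper proceeds directly to splitting and discretization without rescaling.
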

\begin{proof}
    We first prove Property $(i)$. Since each rectangle $R_i\in\R$ has width
    $w_i\le 1$, the instance either has some $x$-coordinate that is not covered
    by any rectangle -- in which case we can split the instance into smaller
    subproblems around this $x$-coordinate -- or the total width of the instance
    is less than $n$.  Hence all the $x$-coordinates of the rectangles can be
    assumed to be between 0 and $n$. Further, we extend the rectangles on both
    sides to make their $x$-coordinates align with the next nearest multiple of
    ${\eps}/{n}$ (discretization). Clearly this involves extension by
    at most ${2\eps}/{n}$ to the width of every rectangle, and at most
    a $2\eps$ addition to the cost of the solution. We can use the same
    argument to show that the heights are also discretized and between 0 and
    $n$, since the heights of all rectangles are also less than 1. The height
    discretization similarly adds another $2\eps$ to the cost of the
    solution.

    Now we look at Property $(ii)$. Consider any horizontal (resp.~vertical)
    segment $\ell\in\OPT$, that is longer than $1/\eps$. From the left
    (resp.~bottom) end point, we divide the segment into consecutive smaller
    segments of length $1/\eps-2$ each, with one potential last piece being
    smaller than $1/\eps-2$. Now, for each smaller segment we extend it on
    both sides in such a way that it completely stabs the rectangles that it
    intersects. Since the maximum width (resp.~height) of a rectangle is 1, we
    extend each such segment by at most 2 units.

    In the worst case the highest possible fractional increase due to the above
    division happens when we have a segment of length $1 /\eps+\delta$ for
    very small $\delta>0$. But even in this case the fractional increase in the
    length of segments in $\OPT$ can be bounded by,
    \[\frac{1/\eps+\delta+4}{1/\eps+\delta}
    =1+4\cdot\frac{1}{1/\eps+\delta} \le 1+4\eps.  \qedhere\]
\end{proof}

\aw{
Then we continue by constructing vertical grid
lines at each $x$-coordinate $a+k\cdot\eps^{-2}$ with $k\in\N$ and
a random offset $a\in\N_0$. By removing the rectangles that intersect with these
grid lines, we divide the input instance into strips of width $1/\eps^2$
each. Next, we show that the removed rectangles can be stabbed at very low total cost.}

\begin{lem}
    Consider the input rectangles that are intersected by vertical grid lines.
    They can all be stabbed by a set of segments of total expected cost
    $O(\eps)\cdot\OPT$.
\end{lem}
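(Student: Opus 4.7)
The plan is to exhibit, without computing it constructively, a random subset of $\OPT$ that already stabs every rectangle intersected by a grid line, and to bound its expected cost. Let $\R^*$ denote the set of rectangles intersected by some vertical grid line, and define $\L^{*} := \{\ell \in \OPT : \ell \text{ stabs some rectangle in } \R^*\}$. Since every $R_i \in \R^*$ is stabbed by some segment of $\OPT$, that segment lies in $\L^*$, so $\L^*$ is itself a feasible stabbing of $\R^*$. It therefore suffices to prove that $\Exp[c(\L^*)] \le O(\eps)\cdot\OPT$, where the expectation is over the random offset $a$.

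By linearity of expectation, $\Exp[c(\L^*)] = \sum_{\ell \in \OPT} |\ell| \cdot \Pr[\ell \in \L^*]$. The key structural observation is that $\ell \in \L^*$ precisely when some vertical grid line falls inside the union $U_\ell$ of the $x$-ranges of the rectangles stabbed by $\ell$. Since consecutive grid lines are $1/\eps^2$ apart and the offset is uniformly random, this yields $\Pr[\ell \in \L^*] \le |U_\ell|\cdot\eps^2$, and the task reduces to a case analysis on the orientation of $\ell$ to bound $|U_\ell|$.

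For a \emph{horizontal} $\ell$, each rectangle it stabs has its $x$-range contained in the $x$-range of $\ell$, so $|U_\ell| \le |\ell|$; combined with $|\ell| \le 1/\eps$ from Lemma~\ref{lem:preprocess2}, this gives $\Pr[\ell \in \L^*] \le \eps$. For a \emph{vertical} $\ell$ at $x$-coordinate $x_\ell$, every rectangle it stabs contains $x_\ell$ in its $x$-range and has width at most $1$, so its $x$-range lies in $[x_\ell-1,\,x_\ell+1]$; hence $|U_\ell| \le 2$ and $\Pr[\ell \in \L^*] \le 2\eps^2$. Adding the two contributions,
\[
    \Exp[c(\L^*)] \;\le\; \eps \sum_{\ell \in \OPT \text{ horizontal}} |\ell| \;+\; 2\eps^2 \sum_{\ell \in \OPT \text{ vertical}} |\ell| \;\le\; (\eps + 2\eps^2)\cdot\OPT \;=\; O(\eps)\cdot\OPT.
\]

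The step requiring the most care is the vertical case: one might initially worry that a short vertical segment could stab many wide rectangles whose combined $x$-ranges form a long interval, making $|U_\ell|$ unbounded. This is ruled out by the bound $w_i \le 1$ from Lemma~\ref{lem:preprocess2} together with the fact that every rectangle stabbed by $\ell$ must contain $x_\ell$; each such rectangle can extend at most $1$ to either side of $x_\ell$, so $|U_\ell| \le 2$ holds independently of the number or widths of the rectangles stabbed by $\ell$.
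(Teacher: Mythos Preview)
Your proof is correct and follows essentially the same approach as the paper: both arguments bound, for each $\ell\in\OPT$, the probability that $\ell$ stabs some rectangle hit by a grid line, obtaining $\eps$ for horizontal segments (via $|\ell|\le 1/\eps$) and $2\eps^2$ for vertical segments (via $w_i\le 1$), and then sum. The only minor difference is that the paper additionally invokes the constant-factor algorithm of \cite{ChanD0SW18} to make the stabbing set computable in polynomial time, whereas you exhibit the (non-constructive) subset $\L^*\subseteq\OPT$; since the lemma as stated is existential, your version suffices, though the constructive step is what the surrounding algorithm actually uses.
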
\awr{Or maybe use horizontal grid lines with the same spacing, and then arguing that the cost inside each cell is at most $O_{\delta,\eps}(1)$?}
\begin{proof}
    Similar to Lemma~\ref{lem:errorbound}, we wish to bound the cost of
    stabbing all the rectangles intersected or \textit{close to} grid lines,
    which can also be stated as bounding the cost, of line segments of $\OPT$
    intersected or close to the grid lines. For a horizontal segment
    $\ell\in\OPT$, let $I_\ell$ be the indicator variable representing the
    event that a grid line intersects $\ell$ ($I_\ell=0$ for vertical
    segments). Since $|\ell|\le1/\eps$, if we take a random offset $a$, we can
    say that,
    \[ \Exp[I_\ell] \le \frac{\eps^{-1}}{\eps^{-2}} = \eps.\]
    For a vertical segment $\ell\in\OPT$, let $J_\ell$ be the indicator
    variable representing the event that a grid line intersects the rectangle
    stabbed by $\ell$ ($J_\ell=0$ for horizontal segments). Since the width of
    all rectangles is less than 1, to stab a rectangle, $\ell$ has to lie
    \textit{close to}, i.e., within $\pm1$ of the vertical grid line. So for a
    random offset $a$ we can say that,
    \[ \Exp[J_\ell] \le \frac{2}{\eps^{-2}} = 2\eps^2\le 2\eps.\]

    With the expectations computed above, we can upper bound the expected cost
    of segments in $\OPT$ intersected by vertical line operations as,
    \begin{align*}
        \Exp\left[\sum_{\ell\in\OPT} (I_\ell+J_\ell)\cdot|\ell|\right]
         &= \sum_{\ell\in\OPT}
         \Exp\left[(I_\ell+J_\ell)\cdot|\ell|\right] \\
         &= \sum_{\ell\in\OPT} |\ell|\cdot(\Exp[I_\ell]+\Exp[J_\ell]) \\
         &\le \sum_{\ell\in\OPT} |\ell|\cdot (\eps + 2\eps) \\
         &=  3\eps\cdot\OPT
    \end{align*}
    Now, by using the $\alpha$-approximate algorithm for \orthstab~from
    \cite{ChanD0SW18}, where $\alpha$ is a constant, we can find a set of
    segments that stabs the set of rectangles intersected by the vertical grid
    lines with an additional cost of $3 \alpha \cdot\eps\cdot\OPT$.
\end{proof}

Now we divide these vertical strips into smaller rectangular cells such that
the optimal solution for each cell is $O(1/\eps^3)$.

\begin{lem}
    \aw{In time $n^{O(1)}$ we can compute a set of horizontal line segments of
    total cost $O(\eps) \cdot\OPT$ that, together with the vertical grid lines,
    divides}
    the input instance into rectangular cells containing independent
    subproblems, each with an optimal solution of cost $O(1/\eps^3)$.
\end{lem}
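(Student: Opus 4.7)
Plan: The plan is to use the polynomial-time $\alpha$-approximation algorithm from \cite{ChanD0SW18} on each strip as a proxy for $\OPT$ and then, within each strip, sweep from bottom to top and place a horizontal cut every time a growing ``$\mathrm{ALG}$-mass'' reaches a threshold $T=\Theta(1/\eps^3)$. For each vertical strip $V_j$ (of width $1/\eps^2$) I would compute $\mathrm{ALG}_j$, an $\alpha$-approximate stabbing solution for the rectangles contained in $V_j$; since the previous lemma removed every rectangle crossing a vertical grid line, each remaining rectangle lies fully inside a single $V_j$.

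Splitting procedure: Within $V_j$, sweep a horizontal line upward and maintain the cost of $\mathrm{ALG}_j$ restricted to the current open sub-slab -- horizontal segments of $\mathrm{ALG}_j$ counted at full length as soon as their $y$-coordinate is passed, vertical segments counted by the length of their intersection with the slab. When this running cost first reaches $T$, insert a horizontal segment of length $1/\eps^2$ at the current height, close the sub-slab, and start a new one above. For each resulting cell $C$, the restriction $\mathrm{ALG}_j|_C$ (with vertical segments clipped to $C$'s $y$-range) is a feasible stabber of the rectangles fully inside $C$, hence
\[
\OPT(C)\;\le\;c(\mathrm{ALG}_j|_C)\;\le\; T \;=\; O(1/\eps^3).
\]
Any rectangle intersected by an inserted cut is stabbed by it and therefore does not need to be accounted for in any cell subproblem.

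Cost bound: The number of cuts inside $V_j$ is at most $c(\mathrm{ALG}_j)/T$, and each cut costs $1/\eps^2$, so the total cost of the added horizontal segments is
\[
\sum_{j}\frac{c(\mathrm{ALG}_j)}{T\,\eps^2}
\;\le\;\frac{\alpha}{T\,\eps^2}\sum_{j}c(\OPT_j)
\;\le\;\frac{\alpha}{T\,\eps^2}\cdot c(\OPT)
\;=\;O(\eps)\cdot\OPT,
\]
where $\OPT_j$ is the optimum for the rectangles in $V_j$ and the inequality $\sum_j c(\OPT_j)\le c(\OPT)$ holds because the restriction of $\OPT$ to $V_j$ is a feasible stabber of the rectangles inside $V_j$ and these restrictions sum in length to at most $c(\OPT)$. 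The whole construction is clearly polynomial in $n$ since $\mathrm{ALG}_j$ is computed in polynomial time and the sweep only needs to examine the discretized $y$-coordinates of segments of $\mathrm{ALG}_j$.

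Main obstacle: The only real subtlety, pervading both the per-cell bound and the cost bound, is that restricting (clipping) a stabbing solution to a sub-region must still yield a feasible stabbing for the rectangles contained in that sub-region. This is guaranteed by the fact -- ensured by the previous lemma plus the fact that rectangles cut by our inserted horizontal segments are stabbed by them -- that every remaining rectangle lies fully inside a single cell, so that horizontal stabbers clipped to $V_j$ still cover those rectangles horizontally and vertical stabbers clipped to $C$ still cross them vertically. Once this feasibility is in hand, the choice $T=\Theta(1/\eps^3)$ simultaneously makes the per-cell optimum $O(1/\eps^3)$ and keeps the added horizontal cost at $O(\eps)\cdot\OPT$.
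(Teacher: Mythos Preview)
Your proof is correct and follows essentially the same sweep-and-cut approach as the paper: within each vertical strip, insert a horizontal cut of length $1/\eps^2$ whenever an $\alpha$-approximate cost reaches the threshold $\Theta(1/\eps^3)$, and then charge the cuts against the $\OPT$-mass in the resulting cells via $\sum_j \OPT_j \le \OPT$. The only difference is that the paper recomputes an $\alpha$-approximation for the growing cell at every discretized $y$-coordinate, whereas you compute $\mathrm{ALG}_j$ once per strip and track its restricted cost---slightly cleaner, though your displayed inequality $c(\mathrm{ALG}_j|_C)\le T$ should really read $O(T)$, since the running cost can overshoot $T$ by the total length of the horizontal segments of $\mathrm{ALG}_j$ lying exactly on the cut height (at most $1/\eps^2=\eps T$ after merging overlaps).
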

\begin{proof}
    We will consider each vertical strip of width at most $1/\eps^2$ separately
    as they constitute independent subproblems.  Consider one such strip. We
    sweep a horizontal line from bottom to top, and at each discretized
    $y$-coordinate compute an  $\alpha$-approximate  solution (using
    $\alpha$-approximation algorithm of \cite{ChanD0SW18} for \orthstab) for
    the subproblem completely contained in the cell below the line. At the
    smallest $y$-coordinate $y_0$, at which the cost of the solution exceeds
    $1/\eps^3$, we take the line $[0,1/\eps^2]\times y_0$ as a part of our
    solution. Note that by moving the line up by another $\eps/n$, the cost of
    the solution can increase by at most $1/\eps$, therefore the cost of
    segments from $\OPT$ in the cell thus created below the line is between
    $1/(\alpha \eps^3)$ and $1/\eps^3+1/\eps$, which is $\Omega(1/\eps^3)$.

    Thus every time we add a horizontal segment as described above, we add a
    segment of length $1/\eps^2$ to the solution, and create a subproblem which
    has an optimal solution of cost $\Omega(1/\eps^3)$. Hence the cost of the
    added segment is at most $O(\eps^{-2}/\eps^{-3})=O(\eps)$ times the cost of
    the solution to the subproblem that it creates. Since an optimal solution
    to  all such cells can be at most the cost of the original instance, the
    cost of the horizontal segments that we have added is at most $O(\eps)
    \cdot\OPT$.
\end{proof}

Now, in the solution to a subproblem contained in a cell as described above,
there can be at most $O(1/\delta\eps^3)$ \textit{long segments}, i.e.,
segments of length at least $\delta$. Since there are only a polynomial number
of discretized candidate segments we can guess the set of long segments in the
solution in polynomial time (the exact analysis is shown later).

\subsection{Computing short segments in the solution}
For each guess made for the set of long segments in the solution, we compute a
$(1+\eps)$-approximation for the \textit{short segments}, i.e., segments of
length at most $\delta$ in the solution using Theorem~\ref{thm:DP}.

Note that any rectangle with both height and width at least $\delta$ will have
to be stabbed by a segment longer than $\delta$. However, since all segments in
the solution of length at least $\delta$ have already been guessed, \aw{each}
remaining segment can stab a $\delta$-large rectangle only along its shorter
dimension. So the rectangles in the instance that have not yet been stabbed can
be partitioned into {\em two} independent instances, one in which has
rectangles with width less than $\delta$ (will be stabbed horizontally) and one
which has rectangles with height less than $\delta$ (will be stabbed
vertically). We apply Theorem~\ref{thm:DP} on each of these
independent instances and combine the solutions to get a $(1+\eps)$-approximate
solution for the remaining problem.

\begin{thm}
     For \orthstab~with $\delta$-large rectangles, there is a
     $(1+\eps)$-approximation algorithm with a running time of
     $(n/\eps)^{O(1/\delta\eps^3)}$.
\end{thm}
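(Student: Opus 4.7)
The plan is to assemble the three ingredients developed in this section. First, apply Lemma \ref{lem:preprocess2} to discretize all coordinates at a loss of $(1+O(\eps))$ so that all candidate segment endpoints lie in a polynomial-size set. Then sample a random offset $a \in \N_0$ and introduce vertical grid lines at $x$-coordinates $a + k\eps^{-2}$: by the first lemma of this section the rectangles intersected by those grid lines can be stabbed at expected cost $O(\eps)\cdot\OPT$, so we set them aside and focus on the remaining problem, which decomposes into independent vertical strips of width $1/\eps^{2}$. Next, inside each strip run the sweep-line cutting procedure of the second lemma, adding horizontal cuts of total cost $O(\eps)\cdot\OPT$ to obtain independent rectangular cells $C_i$ with $c(\OPT(C_i)) = O(1/\eps^3)$.

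Fix one such cell $C_i$. Since every \emph{long} segment in $\OPT(C_i)$ (length $\ge \delta$) contributes at least $\delta$ to a total cost of $O(1/\eps^3)$, their number is at most $O(1/(\delta\eps^3))$. All endpoints are discretized and lie inside $C_i$, so there are only polynomially many candidate segments and we can enumerate all subsets of size $O(1/(\delta\eps^3))$ in time $n^{O(1/(\delta\eps^3))}$; one such guess will coincide with the long segments of some optimum restricted to $C_i$.

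Fix such a guess $\L_{\text{long}}$ and consider the rectangles in $C_i$ not stabbed by $\L_{\text{long}}$. Every remaining segment of $\OPT(C_i)$ has length strictly less than $\delta$, so it cannot stab a $\delta$-large rectangle along a side of length at least $\delta$. Consequently every unstabbed rectangle is stabbed along its \emph{shorter} dimension, and the residual instance partitions into two \emph{independent} subproblems: one containing rectangles with $w_i < \delta$ (hence $h_i \ge \delta \ge w_i$, to be stabbed by short horizontal segments) and one containing rectangles with $h_i < \delta$ (hence $w_i \ge h_i$, to be stabbed by short vertical segments, handled by swapping axes). Each subproblem satisfies the hypothesis $h_i \ge w_i$ of Theorem \ref{thm:DP}, so applying the DP in each yields a $(1+\eps)$-approximation; we output the union together with the guessed long segments and the grid/cutting segments, and retain the best choice of $\L_{\text{long}}$ over all guesses.

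The main obstacle is justifying the clean split between long and short segments: we need that once the long segments are fixed to an optimum choice, no short residual segment is forced to stab a rectangle along its longer dimension, which is exactly what $\delta$-largeness and the length bound guarantee. The expected approximation error is $(1+O(\eps))$ from preprocessing plus $O(\eps)\cdot\OPT$ from the grid lines and horizontal cuts plus $(1+\eps)$ from each invocation of Theorem \ref{thm:DP}, which rescales to $(1+\eps)$; derandomization follows by enumerating $O(1/\eps^2)$ discretized offsets $a$. The running time is the product of the polynomial enumeration of long-segment guesses, $n^{O(1/(\delta\eps^3))}$, and the DP running time $(n/\eps)^{O(1/\eps^3)}$, giving $(n/\eps)^{O(1/(\delta\eps^3))}$.
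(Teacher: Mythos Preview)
Your proposal is correct and follows essentially the same approach as the paper: preprocess via Lemma~\ref{lem:preprocess2}, cut with randomly shifted vertical grid lines and the sweep-line horizontal cuts to obtain cells with $c(\OPT(C_i))=O(1/\eps^3)$, enumerate the $O(1/(\delta\eps^3))$ long segments in each cell, and then split the residual instance into the two shape classes and invoke Theorem~\ref{thm:DP} on each. The running-time accounting and the key observation that short segments can only stab along the shorter side are the same; your explicit mention of derandomizing the offset is a small addition the paper leaves implicit.
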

\begin{proof}
    First, we analyze the running time.  In the first phase of the algorithm,
    we divide the given instance into ${n}/({1/\eps^2})=n\eps^2$ strips, and
    run the constant factor $n^{O(1)}$ approximation algorithm from
    \cite{ChanD0SW18} for each discretized $y$-coordinate. This requires a
    running time of $n\eps^2 \times {n}/({\eps/n}) \times n^{O(1)} = \eps\cdot
    n^{O(1)}$.  Now in a strip there are at most $({n}/{(\eps/n)}) \times
    ({(1/\eps^2)}/{(\eps/n)})=n^3/\eps^4$ possible discretized end points for
    segments, which means there are less than $\binom{n^3/\eps^4}{2}$, i.e.,
    $O(n^4/\eps^8)$ possible candidate segments in a strip. Hence, we can
    enumerate over all of their subsets of size $O(1/\delta\eps^3)$ in
    $(n/\eps)^{O(1/\delta\eps^3)}$ time.  Following this we have two
    invocations of Theorem~\ref{thm:DP} to solve the independent subproblems
    which would together take $(n/\eps)^{O(1/\eps^3)}$. This brings the overall
    running time of the algorithm to be $(n/\eps)^{O(1/\delta\eps^3)}$.

    The correctness of the algorithm follows from the fact that we can
    correctly guess all the $\delta$-large segments in $\OPT$, and once we have
    them the remaining problem gets divided into independent subproblems, for
    each of which we can compute a $(1+\eps)$-approximation. So the
    approximation factor for the overall algorithm is also $(1+\eps)$.
\end{proof}

\section{Generalized minimum Manhattan network} \label{sec:gmmn}
In the generalized minimum Manhattan network problem (GMMN), we are given a set
$R$ of $n$ unordered terminal pairs, and the goal is to find a minimum-length
rectilinear network such that every pair in $R$ is M-connected, that is,
connected by an Manhattan-path, which is a path consisting of axis parallel
segments whose total length is the Manhattan distance of the pair under
consideration.

For the special case of 2D-GMNN, where all terminals lie on a 2 dimensional
plane, \cite{das2018approximating} gave a $O(\log n)$-approximate solution
using a $(6+\eps)$-approximate solution for $x$-separated 2D-GMMN instances. An
instance of GMMN is called $x$-separated if there is a vertical line that
intersects the rectangle formed by any pair of terminals as the diagonally
opposite corners of the rectangle. Their algorithm solves a $x$-separated
instance by giving a solution of the form $N=A_{\text{up}}\cup A_{\text
{down}}\cup S$, where $S$ is a set of horizontal segments stabbing all the
rectangles in the instance and $A_{\text{up}}\cup A_{\text{down}}$ is a set of
segments of cost bounded by $(2+2\eps)\cdot(\OPT+\OPT_{\text{ver}})$.  Using a
4-approximate algorithm for finding the set of segments $S$, horizontally
stabbing the rectangles, a bound of $(6+\eps)\cdot\OPT$ is obtained for the
cost of segments in $N$.  But instead of the 4-approximate algorithm we can
instead use the PTAS from Corollary~\ref{cor:horzstabbing} to obtain a
better bound of $(4+\eps)\cdot\OPT$ on the cost as follows,
\begin{align*}
    \aw{c(N)} &= \aw{c(A_\text{up})} + \aw{c(A_\text{down})} + \aw{c(S)} \\
    &\le (2+2\eps)\cdot(\OPT+\OPT_{\text{ver}}) + (1+\eps)\cdot\OPT_{\text{hor}} \\
    &\le (2+2\eps)\cdot\OPT + (2+2\eps)\cdot(\OPT_{\text{ver}}+\OPT_{\text{hor}}) \\
    &= (4+\eps')\cdot\OPT.
\end{align*}
Note that in the above equations $\OPT_\text{ver}$ and $\OPT_\text{hor}$ refer to the
cost of the vertical and horizontal segments respectively in an optimal solution.

\bibliographystyle{alpha}
\bibliography{references}

\newcommand{\etalchar}[1]{$^{#1}$}
\begin{thebibliography}{BMSV{\etalchar{+}}09}

\bibitem[AES10]{aronov2010small}
Boris Aronov, Esther Ezra, and Micha Sharir.
\newblock Small-size $\backslash$eps-nets for axis-parallel rectangles and
  boxes.
\newblock {\em SIAM Journal on Computing}, 39(7):3248--3282, 2010.

\bibitem[AW13]{AdamaszekW13}
Anna Adamaszek and Andreas Wiese.
\newblock Approximation schemes for maximum weight independent set of
  rectangles.
\newblock In {\em FOCS}, pages 400--409, 2013.

\bibitem[BG95]{BronnimannG95}
Herv{\'{e}} Br{\"{o}}nnimann and Michael~T. Goodrich.
\newblock Almost optimal set covers in finite vc-dimension.
\newblock {\em Discret. Comput. Geom.}, 14(4):463--479, 1995.

\bibitem[BK14]{BansalK14}
Nikhil Bansal and Arindam Khan.
\newblock Improved approximation algorithm for two-dimensional bin packing.
\newblock In {\em SODA}, pages 13--25, 2014.

\bibitem[BMSV{\etalchar{+}}09]{becchetti2009latency}
Luca Becchetti, Alberto Marchetti-Spaccamela, Andrea Vitaletti, Peter Korteweg,
  Martin Skutella, and Leen Stougie.
\newblock Latency-constrained aggregation in sensor networks.
\newblock {\em ACM Transactions on Algorithms (TALG)}, 6(1):1--20, 2009.

\bibitem[BP14]{bansal2014geometry}
Nikhil Bansal and Kirk Pruhs.
\newblock The geometry of scheduling.
\newblock {\em SIAM Journal on Computing}, 43(5):1684--1698, 2014.

\bibitem[CG14]{ChanG14}
Timothy~M. Chan and Elyot Grant.
\newblock Exact algorithms and apx-hardness results for geometric packing and
  covering problems.
\newblock {\em Comput. Geom.}, 47(2):112--124, 2014.

\bibitem[CGKS12a]{chan2012weighted}
Timothy~M Chan, Elyot Grant, Jochen K{\"o}nemann, and Malcolm Sharpe.
\newblock Weighted capacitated, priority, and geometric set cover via improved
  quasi-uniform sampling.
\newblock In {\em Proceedings of the twenty-third annual ACM-SIAM symposium on
  Discrete Algorithms}, pages 1576--1585. SIAM, 2012.

\bibitem[CGKS12b]{ChanGKS12}
Timothy~M. Chan, Elyot Grant, Jochen K{\"{o}}nemann, and Malcolm Sharpe.
\newblock Weighted capacitated, priority, and geometric set cover via improved
  quasi-uniform sampling.
\newblock In Yuval Rabani, editor, {\em SODA}, pages 1576--1585, 2012.

\bibitem[Chv79]{chvatal1979greedy}
Vasek Chvatal.
\newblock A greedy heuristic for the set-covering problem.
\newblock {\em Mathematics of operations research}, 4(3):233--235, 1979.

\bibitem[CKPT17]{ChristensenKPT17}
Henrik~I Christensen, Arindam Khan, Sebastian Pokutta, and Prasad Tetali.
\newblock Approximation and online algorithms for multidimensional bin packing:
  {A} survey.
\newblock {\em Computer Science Review}, 24:63--79, 2017.

\bibitem[CvDF{\etalchar{+}}18]{ChanD0SW18}
Timothy~M. Chan, Thomas~C. van Dijk, Krzysztof Fleszar, Joachim Spoerhase, and
  Alexander Wolff.
\newblock Stabbing rectangles by line segments - how decomposition reduces the
  shallow-cell complexity.
\newblock In Wen{-}Lian Hsu, Der{-}Tsai Lee, and Chung{-}Shou Liao, editors,
  {\em ISAAC}, volume 123, pages 61:1--61:13, 2018.

\bibitem[DFK{\etalchar{+}}18]{das2018approximating}
Aparna Das, Krzysztof Fleszar, Stephen Kobourov, Joachim Spoerhase, Sankar
  Veeramoni, and Alexander Wolff.
\newblock Approximating the generalized minimum manhattan network problem.
\newblock {\em Algorithmica}, 80(4):1170--1190, 2018.

\bibitem[DJK{\etalchar{+}}21]{DeppertJ0RT21}
Max~A Deppert, Klaus Jansen, Arindam Khan, Malin Rau, and Malte Tutas.
\newblock Peak demand minimization via sliced strip packing.
\newblock In {\em APPROX/RANDOM}, volume 207, pages 21:1--21:24, 2021.

\bibitem[EGSV21]{QPTAS}
Friedrich Eisenbrand, Martina Gallato, Ola Svensson, and Moritz Venzin.
\newblock A {QPTAS} for stabbing rectangles.
\newblock {\em CoRR}, abs/2107.06571, 2021.

\bibitem[FJQS08]{finke2008batch}
Gerd Finke, Vincent Jost, Maurice Queyranne, and Andr{\'a}s Seb{\H{o}}.
\newblock Batch processing with interval graph compatibilities between tasks.
\newblock {\em Discrete Applied Mathematics}, 156(5):556--568, 2008.

\bibitem[GGA{\etalchar{+}}20]{Galvez0AJ0R20}
Waldo G{\'{a}}lvez, Fabrizio Grandoni, Afrouz~Jabal Ameli, Klaus Jansen,
  Arindam Khan, and Malin Rau.
\newblock A tight (3/2+{\(\epsilon\)}) approximation for skewed strip packing.
\newblock In {\em APPROX/RANDOM}, volume 176, pages 44:1--44:18, 2020.

\bibitem[GGH{\etalchar{+}}17]{GalvezGHI0W17}
Waldo G{\'{a}}lvez, Fabrizio Grandoni, Sandy Heydrich, Salvatore Ingala,
  Arindam Khan, and Andreas Wiese.
\newblock Approximating geometric knapsack via l-packings.
\newblock In {\em FOCS}, pages 260--271, 2017.

\bibitem[GGK{\etalchar{+}}21]{Galvez00RW21}
Waldo G{\'{a}}lvez, Fabrizio Grandoni, Arindam Khan, Diego
  Ram{\'{\i}}rez{-}Romero, and Andreas Wiese.
\newblock Improved approximation algorithms for 2-dimensional knapsack: Packing
  into multiple l-shapes, spirals, and more.
\newblock In {\em SoCG}, volume 189, pages 39:1--39:17, 2021.

\bibitem[GIK02]{gaur2002constant}
Daya~Ram Gaur, Toshihide Ibaraki, and Ramesh Krishnamurti.
\newblock Constant ratio approximation algorithms for the rectangle stabbing
  problem and the rectilinear partitioning problem.
\newblock {\em Journal of Algorithms}, 43(1):138--152, 2002.

\bibitem[GKM{\etalchar{+}}21]{Galvez22}
Waldo G{\'{a}}lvez, Arindam Khan, Mathieu Mari, Tobias M{\"{o}}mke,
  Madhusudhan~Reddy Pittu, and Andreas Wiese.
\newblock A (2+$\epsilon$)-approximation algorithm for maximum independent set
  of rectangles.
\newblock {\em CoRR}, abs/2106.00623, 2021.

\bibitem[HJPVS14]{harren20145}
Rolf Harren, Klaus Jansen, Lars Pr{\"a}del, and Rob Van~Stee.
\newblock A (5/3+ $\varepsilon$)-approximation for strip packing.
\newblock {\em Computational Geometry}, 47(2):248--267, 2014.

\bibitem[HM85]{hochbaum1985approximation}
Dorit~S. Hochbaum and Wolfgang Maass.
\newblock Approximation schemes for covering and packing problems in image
  processing and vlsi.
\newblock {\em Journal of the ACM (JACM)}, 32(1):130--136, 1985.

\bibitem[KMSW21]{KMSW21}
Arindam Khan, Arnab Maiti, Amatya Sharma, and Andreas Wiese.
\newblock On guillotine separable packings for the two-dimensional geometric
  knapsack problem.
\newblock In {\em SoCG}, volume 189, pages 48:1--48:17, 2021.

\bibitem[KP20]{KP20}
Arindam Khan and Madhusudhan~Reddy Pittu.
\newblock On guillotine separability of squares and rectangles.
\newblock In {\em APPROX/RANDOM}, volume 176, pages 47:1--47:22, 2020.

\bibitem[KS06]{kovaleva2006approximation}
Sofia Kovaleva and Frits Spieksma.
\newblock Approximation algorithms for rectangle stabbing and interval stabbing
  problems.
\newblock {\em SIAM Journal on Discrete Mathematics}, 20(3):748--768, 2006.

\bibitem[KS21]{KhanS21}
Arindam Khan and Eklavya Sharma.
\newblock Tight approximation algorithms for geometric bin packing with skewed
  items.
\newblock In {\em APPROX/RANDOM}, volume 207, pages 22:1--22:23, 2021.

\bibitem[KSS21]{KhanSS21}
Arindam Khan, Eklavya Sharma, and K.V.N. Sreenivas.
\newblock Geometry meets vectors: Approximation algorithms for multidimensional
  packing.
\newblock {\em CoRR}, abs/2106.13951, 2021.

\bibitem[Mit21]{mitchell21}
Joseph S.~B. Mitchell.
\newblock Approximating maximum independent set for rectangles in the plane.
\newblock {\em CoRR}, abs/2101.00326, 2021.

\bibitem[MRR15]{MustafaRR15}
Nabil~H. Mustafa, Rajiv Raman, and Saurabh Ray.
\newblock Quasi-polynomial time approximation scheme for weighted geometric set
  cover on pseudodisks and halfspaces.
\newblock {\em {SIAM} J. Comput.}, 44(6):1650--1669, 2015.

\bibitem[MW15]{MomkeW15}
Tobias M{\"{o}}mke and Andreas Wiese.
\newblock A (2+$\varepsilon$)-approximation algorithm for the storage
  allocation problem.
\newblock In {\em ICALP}, volume 9134, pages 973--984, 2015.

\bibitem[MW20]{MomkeW20}
Tobias M{\"{o}}mke and Andreas Wiese.
\newblock Breaking the barrier of 2 for the storage allocation problem.
\newblock In {\em ICALP}, volume 168, pages 86:1--86:19, 2020.

\bibitem[Var10]{Varadarajan10}
Kasturi~R. Varadarajan.
\newblock Weighted geometric set cover via quasi-uniform sampling.
\newblock In {\em STOC}, pages 641--648, 2010.

\end{thebibliography}
\end{document}